\newtheorem{theorem}{Theorem}[section]
\newtheorem{proposition}[theorem]{Proposition}
\newtheorem{lemma}[theorem]{Lemma}
\newtheorem{corollary}[theorem]{Corollary}
\newtheorem{remark}{Remark}
\newcommand{\al}{\alpha}
\newcommand{\bt}{\beta}
\newcommand{\s}{\sigma}
\newcommand{\be}{\begin{equation}}
\newcommand{\ee}{\end{equation}}
\newcommand{\bea}{\begin{eqnarray}}
\newcommand{\eea}{\end{eqnarray}}
\numberwithin{equation}{section}
\begin{document}

\title{Painlev\'{e} V and the Hankel Determinant for a Singularly Perturbed Jacobi Weight}
\author{Chao Min and Yang Chen}


\date{\today}
\maketitle
\begin{abstract}
We study the Hankel determinant generated by a singularly perturbed Jacobi weight
$$
w(x,t):=(1-x^2)^\al\mathrm{e}^{-\frac{t}{x^{2}}},\;\;\;\;\;\;x\in[-1,1],\;\;\al>0,\;\;t\geq 0.
$$
If $t=0$, it is reduced to the classical symmetric Jacobi weight. For $t>0$, the factor $\mathrm{e}^{-\frac{t}{x^{2}}}$ induces an infinitely strong zero at the origin. This Hankel determinant is related to the Wigner time-delay distribution in chaotic cavities.

In the finite $n$ dimensional case, we obtain two auxiliary quantities $R_n(t)$ and $r_n(t)$ by using the ladder operator approach. We show that the Hankel determinant has an integral representation in terms of $R_n(t)$, where $R_n(t)$ is closely related to a particular Painlev\'{e} V transcendent. Furthermore, we derive a second-order nonlinear differential equation and also a second-order difference equation for the logarithmic derivative of the Hankel determinant. This quantity can be expressed in terms of the Jimbo-Miwa-Okamoto $\s$-function of a particular Painlev\'{e} V. Then we consider the asymptotics of the Hankel determinant under a suitable double scaling, i.e. $n\rightarrow\infty$ and $t\rightarrow 0$ such that $s=2n^2 t$ is fixed. Based on previous results by using the Coulomb fluid method, we obtain the large $s$ and small $s$ asymptotic behaviors of the scaled Hankel determinant, including the constant term in the asymptotic expansion.
\end{abstract}

$\mathbf{Keywords}$: Random matrix theory; Hankel determinant; Singularly perturbed Jacobi weight;

Ladder operators; Painlev\'{e} V; Double scaling.


\section{Introduction and Preliminaries}
Random matrices were introduced in nuclear physics by Wigner in the 1950s to describe the statistics of the energy levels of quantum systems. The theory of random matrices makes the hypothesis that the local statistical behavior of the energy levels is identical with that of the eigenvalues of a random matrix.

In random matrix theory (RMT), it is well known that the joint probability density of the eigenvalues $\{x_j\}_{j=1}^{n}$ of $n\times n$ Hermitian matrices in the unitary ensemble is \cite{Mehta}
$$
p(x_1, x_2,\ldots, x_n)dx_1 dx_2\cdots dx_n=\frac{1}{D_n[w]}\frac{1}{n!}\prod_{1\leq j<k\leq n}(x_j-x_k)^2\prod_{l=1}^{n}w(x_l)dx_l,
$$
where $w(x)$ is a weight or probability density supported on the interval $I\subseteq \mathbb{R}$, and we suppose the moments of all orders,
$$
\mu_{j}:=\int_{I}x^{j}w(x)dx,\qquad j=0,1,2,\ldots
$$
exist. In addition, $D_n[w]$ is the normalization constant or the partition function,
$$
D_n[w]=\frac{1}{n!}\int_{I^n}\prod_{1\leq j<k\leq n}(x_j-x_k)^2\prod_{l=1}^{n}w(x_l)dx_l.
$$
As a matter of fact, $D_n[w]$ can be expressed as the determinant of the Hankel or moment matrix \cite[(2.2.11)]{Szego}:
$$
D_{n}[w]=\det\left(\mu_{j+k}\right)_{j,k=0}^{n-1}.
$$

Hankel determinants have been studied extensively over the past few years in part due to connections with RMT \cite{BWW,Charlier,CG,CC,LCF,MLC,Wu,Xu2015,Xu2016}. This is because Hankel determinants compute the most fundamental objects in RMT, such as the partition function of a random matrix ensemble, the probability distribution function of the largest eigenvalue of Hermitian matrices and the moment generating function of linear statistics associated with the ensemble.

In this paper, we consider the Hankel determinant generated by a singularly perturbed Jacobi weight, namely,
$$
D_{n}(t):=\det\left(\int_{-1}^{1}x^{j+k}w(x,t)dx\right)_{j,k=0}^{n-1},
$$
where
\be\label{wei}
w(x,t):=(1-x^2)^\al\mathrm{e}^{-\frac{t}{x^{2}}},\;\;\;\;\;\;x\in[-1,1],\;\;\al>0,\;\;t\geq 0.
\ee
The study of this Hankel determinant is motivated in part by the Wigner time-delay distribution in chaotic cavities \cite{Texier}. In recent years, the asymptotic analysis of orthogonal polynomials and Hankel determinants for the singularly perturbed weights have attracted a lot of interests; see, e.g., \cite{BMM,CCF2019,MLC,Wang,Xu2015,Xu2016}.

Noting that the weight (\ref{wei}) is even, an evaluation of the moments shows that
\bea
\mu_{j}(t):&=&\int_{-1}^{1}x^{j}w(x,t)dx\nonumber\\[5pt]
&=&\left\{
\begin{aligned}
&0,&j=1,3,5,\ldots,\\
&(-1)^{\frac{j}{2}}\pi\left(\frac{\Gamma(1+\al)\Phi\left(-\frac{j+1}{2}-\al,\frac{1-j}{2};-t\right)}{\Gamma\left(\frac{1-j}{2}\right)\Gamma\left(\frac{j+3}{2}+\al\right)}
-\frac{t^{\frac{j+1}{2}}\Phi\left(-\al,\frac{j+3}{2};-t\right)}{\Gamma\left(\frac{j+3}{2}\right)}\right),&j=0,2,4,\ldots.
\end{aligned}
\right.\nonumber
\eea
where $\Phi(\cdot, \cdot; \cdot)$ is the confluent hypergeometric function \cite{Lebedev}.

In addition, for the unperturbed symmetric Jacobi weight, i.e. $t=0$, we have the explicit formula of $D_n(0)$ \cite[(17.6.2)]{Mehta}:
\bea\label{bg}
D_n(0)&=&\frac{2^{n(n+2\al)}}{n!}\prod_{j=1}^{n}\frac{\Gamma(j+1)\Gamma^2(j+\al)}{\Gamma(j+n+2\al)}\nonumber\\
&=&2^{n(n+2\al)}\frac{G(n+1)G(n+1+2\al)G^2(n+\al+1)}{G(2n+2\al+1)G^2(\al+1)},
\eea
where $G(\cdot)$ is the Barnes $G$-function which satisfies the relation \cite{Barnes}
$$
G(z+1)=\Gamma(z)G(z),\qquad\qquad G(1):=1.
$$
See also \cite{Voros} for more properties of this function.

It is well known that $D_n(t)$ can be expressed as the product of the square of the $L^2$ norms of the monic orthogonal polynomials \cite[(2.1.6)]{Ismail}, namely
\be\label{hankel}
D_{n}(t)=\prod_{j=0}^{n-1}h_{j}(t),
\ee
where
\be\label{or}
h_{j}(t)\delta_{jk}=\int_{-1}^{1}P_{j}(x,t)P_{k}(x,t)w(x,t)dx,\qquad j, k=0,1,2,\ldots.
\ee
and $P_{j}(x,t),\; j=0,1,2,\ldots,$ are the monic polynomials of degree $j$ orthogonal with respect to the weight function $w(x,t)$.

The main method of this paper is the ladder operator approach associated with the orthogonal polynomials. This approach has been widely applied to solve the problems on Hankel determinants generated by various weight functions; see \cite{ChenDai,ChenIts,Dai,Min2020,MLC} for reference. So we introduce some elementary facts about the orthogonal polynomials.

In the following discussions, $n$ denotes any nonnegative integers instead of the dimension of the Hermitian matrices. Since the weight $w(x,t)$ is even, $P_{n}(x,t)$ comprises only even or odd powers of $x$ when $n$ is even or odd respectively \cite{Chihara}. That is,
\be\label{expan}
P_{n}(x,t)=x^{n}+\mathrm{p}(n,t)x^{n-2}+\cdots,\qquad n=0,1,2,\ldots,
\ee
and we shall see that $\mathrm{p}(n,t)$, the coefficient of $x^{n-2}$, will play an important role. We set $\mathrm{p}(0,t)=0, \mathrm{p}(1,t)=0$.

The orthogonal polynomials $\{P_{n}(x,t)\}_{n=0}^{\infty}$ satisfy the three-term recurrence relation \cite{Szego}
\be\label{rr}
xP_{n}(x,t)=P_{n+1}(x,t)+\beta_{n}(t)P_{n-1}(x,t)
\ee
with the initial conditions
$$
P_{0}(x,t)=1,\;\;\;\;\beta_{0}(t)P_{-1}(x,t)=0.
$$
From (\ref{or}), (\ref{expan}) and (\ref{rr}), it is easy to see that $\bt_n(t)$ has two alternative representations:
\bea
\beta_{n}(t)&=&\mathrm{p}(n,t)-\mathrm{p}(n+1,t)\label{be1}\\
&=&\frac{h_{n}(t)}{h_{n-1}(t)},\label{be2}
\eea
and a telescopic sum gives
\be\label{sum}
\sum_{j=0}^{n-1}\beta_{j}(t)=-\mathrm{p}(n,t).
\ee
Note that from (\ref{be1}) we have $\bt_0(t)=0$.

The rest of this paper is arranged as follows. In Sec. 2, we apply the ladder operators and its supplementary conditions to the singularly perturbed Jacobi weight, and we obtain the second-order nonlinear difference equations satisfied by the auxiliary quantities $R_n(t)$ and $r_n(t)$, respectively. In Sec. 3, we show that $R_n(t)$ and $r_n(t)$ satisfy the coupled Riccati equations, from which we obtain the second-order differential equations satisfied by $R_n(t)$ and $r_n(t)$, respectively. We find that the equation for $R_n(t)$ can be transformed to a particular Painlev\'{e} V. In Sec. 4, we derive the second-order differential equation and also the difference equation satisfied by the logarithmic derivative of the Hankel determinant. We show that this quantity can be expressed as an integral in terms of $R_n(t)$. In Sec. 5, we study the asymptotic behavior of the Hankel determinant under a suitable double scaling, and the large $s$ and small $s$ asymptotic expansions of the scaled Hankel determinant are established.

\section{Ladder Operators and Second-Order Nonlinear Difference Equations}
In this section, we introduce a pair of ladder operators and the compatibility conditions at first. The information below can be found in Chen and Its \cite{ChenIts}.

Suppose that $w(x)$ is a continuous even weight function defined on $[-1,1]$, and $w(-1)=w(1)=0$. The monic orthogonal polynomials with respect to $w(x)$ satisfy the lowering operator equation
\be\label{lowering}
\left(\frac{d}{dz}+B_{n}(z)\right)P_{n}(z)=\beta_{n}A_{n}(z)P_{n-1}(z)
\ee
and the raising operator equation
\be\label{raising}
\left(\frac{d}{dz}-B_{n}(z)-\mathrm{v}'(z)\right)P_{n-1}(z)=-A_{n-1}(z)P_{n}(z),
\ee
where $\mathrm{v}(z):=-\ln w(z)$, and
\be\label{an}
A_{n}(z):=\frac{1}{h_{n}}\int_{-1}^{1}\frac{\mathrm{v}'(z)-\mathrm{v}'(y)}{z-y}P_{n}^{2}(y)w(y)dy,
\ee
\be\label{bn}
B_{n}(z):=\frac{1}{h_{n-1}}\int_{-1}^{1}\frac{\mathrm{v}'(z)-\mathrm{v}'(y)}{z-y}P_{n}(y)P_{n-1}(y)w(y)dy.
\ee

The functions $A_n(z)$ and $B_n(z)$ are not independent but must satisfy the following supplementary conditions valid for $z\in \mathbb{C}\cup\{\infty\}$:
\be
B_{n+1}(z)+B_{n}(z)=z A_{n}(z)-\mathrm{v}'(z), \tag{$S_{1}$}
\ee
\be
1+z(B_{n+1}(z)-B_{n}(z))=\beta_{n+1}A_{n+1}(z)-\beta_{n}A_{n-1}(z), \tag{$S_{2}$}
\ee
\be
B_{n}^{2}(z)+\mathrm{v}'(z)B_{n}(z)+\sum_{j=0}^{n-1}A_{j}(z)=\beta_{n}A_{n}(z)A_{n-1}(z). \tag{$S_{2}'$}
\ee

Solving for $P_{n-1}(z)$ from (\ref{lowering}) and substituting it into (\ref{raising}), we find that
$y(z):=P_{n}(z)$ satisfies the following second-order linear ordinary differential equation:
$$
y''(z)-\left(\mathrm{v}'(z)+\frac{A_{n}'(z)}{A_{n}(z)}\right)y'(z)+\left(B_{n}'(z)-B_{n}(z)\frac{A_{n}'(z)}{A_{n}(z)}
+\sum_{j=0}^{n-1}A_{j}(z)\right)y(z)=0,
$$
where use has been made of ($S_{2}'$).

In our problem, the weight and associated quantities are
$$
w(x,t)=(1-x^2)^\al\mathrm{e}^{-\frac{t}{x^{2}}},
$$
$$
\mathrm{v}(z)=\frac{t}{z^{2}}-\al\ln(1-z^2),\qquad\mathrm{v}'(z)=-\frac{2t}{z^3}+\frac{2\al z}{1-z^2},
$$
$$
\frac{\mathrm{v}'(z)-\mathrm{v}'(y)}{z-y}=\frac{2t(z^2+zy+y^2)}{z^3y^3}+\frac{2\al(1+zy)}{(1-z^2)(1-y^2)}.
$$
In the following discussions, we suppress the $t$ dependence in $w(x), h_n$ and $\bt_n$ for brevity.
\begin{proposition}
For our problem, we have
\be\label{anz}
A_{n}(z)=\frac{R_{n}(t)}{z^2}+\frac{2n+1+2\al+R_{n}(t)}{1-z^2},
\ee
\be\label{bnz}
B_{n}(z)=\frac{\left[1-(-1)^n\right]t}{z^3}+\frac{r_{n}(t)}{z}+\frac{z(n+r_n(t))}{1-z^2},
\ee
where
\be\label{Rnt}
R_{n}(t):=\frac{2t}{h_{n}}\int_{-1}^{1}\frac{P_{n}^{2}(y)w(y)}{y^2} dy,
\ee
\be\label{rnt}
r_{n}(t):=\frac{2t}{h_{n-1}}\int_{-1}^{1}\frac{P_{n}(y)P_{n-1}(y)w(y)}{y^3} dy.
\ee
\end{proposition}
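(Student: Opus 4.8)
The plan is to substitute the kernel $\dfrac{\mathrm{v}'(z)-\mathrm{v}'(y)}{z-y}$ computed just above into the integral representations (\ref{an}) and (\ref{bn}), decompose everything into partial fractions in the integration variable $y$, and then use the parity of the weight to discard half of the resulting integrals. Concretely, one writes
$$
\frac{z^2+zy+y^2}{z^3y^3}=\frac{1}{zy^3}+\frac{1}{z^2y^2}+\frac{1}{z^3y},\qquad
\frac{1+zy}{(1-z^2)(1-y^2)}=\frac{1}{1-z^2}\left(\frac{1}{1-y^2}+\frac{zy}{1-y^2}\right).
$$
Since $w(x,t)$ is even, $P_n^2(y)w(y)$ is an even function of $y$ while $P_n(y)P_{n-1}(y)w(y)$ is odd; hence in $A_n(z)$ the $1/(zy^3)$, $1/(z^3y)$ and $zy/(1-y^2)$ contributions integrate to zero, and in $B_n(z)$ the $1/(z^2y^2)$ and $1/(1-y^2)$ contributions integrate to zero. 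What survives is: for $A_n(z)$, the term $R_n(t)/z^2$ coming directly from the definition (\ref{Rnt}), together with $\dfrac{2\al}{h_n(1-z^2)}\int_{-1}^{1}\frac{P_n^2(y)w(y)}{1-y^2}\,dy$; for $B_n(z)$, the term $r_n(t)/z$ from (\ref{rnt}), a term $\dfrac{2t}{h_{n-1}z^3}\int_{-1}^{1}\frac{P_n(y)P_{n-1}(y)w(y)}{y}\,dy$, and $\dfrac{2\al z}{h_{n-1}(1-z^2)}\int_{-1}^{1}\frac{yP_n(y)P_{n-1}(y)w(y)}{1-y^2}\,dy$.

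It then remains to evaluate these three integrals in closed form. The integral $\int_{-1}^1 P_nP_{n-1}w/y\,dy$ I would handle by a parity-and-degree argument: $P_nP_{n-1}$ is an odd polynomial, hence divisible by $y$, and when $n$ is even $P_{n-1}(y)/y$ is a polynomial of degree $n-2$, so orthogonality (\ref{or}) forces the integral to vanish, whereas when $n$ is odd $P_n(y)/y$ is a monic polynomial of degree $n-1$ with the same parity as $P_{n-1}$, so the integral equals $h_{n-1}$; this reproduces the coefficient $[1-(-1)^n]t$. The two integrals with $1/(1-y^2)$ are the heart of the matter, and I would obtain them by integration by parts, the boundary terms vanishing because $w(\pm1)=0$. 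From $\int_{-1}^1\frac{d}{dy}[\,yP_n^2(y)w(y)\,]\,dy=0$, using $w'=-\mathrm{v}'w$, the algebraic identity $\frac{2\al y^2}{1-y^2}=-2\al+\frac{2\al}{1-y^2}$, the orthogonality fact $\int_{-1}^1\bigl(yP_n'(y)-nP_n(y)\bigr)P_n(y)w(y)\,dy=0$, and (\ref{Rnt}), one is led to
$$
\frac{2\al}{h_n}\int_{-1}^{1}\frac{P_n^2(y)w(y)}{1-y^2}\,dy=2n+1+2\al+R_n(t).
$$
Similarly $\int_{-1}^1\frac{d}{dy}[\,P_n(y)P_{n-1}(y)w(y)\,]\,dy=0$, together with $\int_{-1}^1P_n'(y)P_{n-1}(y)w(y)\,dy=nh_{n-1}$, $\int_{-1}^1P_n(y)P_{n-1}'(y)w(y)\,dy=0$ and (\ref{rnt}), gives
$$
\frac{2\al}{h_{n-1}}\int_{-1}^{1}\frac{yP_n(y)P_{n-1}(y)w(y)}{1-y^2}\,dy=n+r_n(t).
$$
Inserting these three evaluations back into the surviving terms yields exactly (\ref{anz}) and (\ref{bnz}).

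The main obstacle is choosing the right total derivatives for the integration by parts, so that the two a priori unknown integrals over $1/(1-y^2)$ close up in terms of $R_n(t)$, $r_n(t)$, the norms $h_n$, $h_{n-1}$ and elementary quantities; once the correct auxiliary identities $\int\frac{d}{dy}[yP_n^2w]=0$ and $\int\frac{d}{dy}[P_nP_{n-1}w]=0$ are in hand, the rest reduces to bookkeeping with parity and orthogonality. A secondary point requiring care is the parity analysis that isolates the $[1-(-1)^n]/z^3$ term in $B_n(z)$: one must verify that this contribution genuinely vanishes for even $n$ and equals $2t/z^3$ for odd $n$, which is precisely the output of the degree count on $P_nP_{n-1}/y$.
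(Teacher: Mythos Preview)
Your proposal is correct and follows essentially the same approach as the paper: substitute the kernel, use parity to discard half the terms, handle $\int P_nP_{n-1}w/y\,dy$ by a parity/degree argument, and close the two $1/(1-y^2)$ integrals via integration by parts. The only cosmetic difference is that the paper organizes the integrations by parts as $\int yP_n^2(1-y^2)^{\alpha}\,d\mathrm{e}^{-t/y^2}$ and $\int P_nP_{n-1}\mathrm{e}^{-t/y^2}\,d(1-y^2)^{\alpha}$ rather than as the total-derivative identities $\int\frac{d}{dy}[yP_n^2w]\,dy=0$ and $\int\frac{d}{dy}[P_nP_{n-1}w]\,dy=0$, but these are exactly the same computations.
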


\begin{proof}
According to the definition of $A_{n}(z)$ in (\ref{an}), we get
\bea
A_{n}(z)
&=&\frac{1}{h_{n}}\int_{-1}^{1}\left[\frac{2t(z^2+zy+y^2)}{z^3y^3}+\frac{2\al(1+zy)}{(1-z^2)(1-y^2)}\right]P_{n}^{2}(y)w(y)dy\nonumber\\[5pt]
&=&\frac{2t}{z^2h_{n}}\int_{-1}^{1}\frac{P_{n}^{2}(y)w(y)}{y^2} dy+\frac{2\al}{(1-z^2)h_{n}}\int_{-1}^{1}\frac{P_{n}^{2}(y)w(y)}{1-y^2} dy,\nonumber
\eea
where we have used the parity of the integrand to simplify the result. The above two integrals have a simple relation through integration by parts,
\bea
\frac{2t}{h_{n}}\int_{-1}^{1}\frac{P_{n}^{2}(y)w(y)}{y^2} dy&=&\frac{1}{h_{n}}\int_{-1}^{1}y P_{n}^{2}(y)(1-y^2)^\al d\mathrm{e}^{-\frac{t}{y^{2}}}\nonumber\\
&=&-2n-1+\frac{2\al}{h_{n}}\int_{-1}^{1}y^2P_{n}^{2}(y)(1-y^2)^{\al-1}\mathrm{e}^{-\frac{t}{y^{2}}}dy\nonumber\\
&=&-2n-1-2\al+\frac{2\al}{h_{n}}\int_{-1}^{1}\frac{P_{n}^{2}(y)w(y)}{1-y^2} dy.\nonumber
\eea
That is,
$$
\frac{2\al}{h_{n}}\int_{-1}^{1}\frac{P_{n}^{2}(y)w(y)}{1-y^2} dy=2n+1+2\al+R_n(t).
$$
Then formula (\ref{anz}) follows.

Similarly, from the definition of $B_n(z)$ in (\ref{bn}), we have
\bea
B_{n}(z)&=&\frac{1}{h_{n-1}}\int_{-1}^{1}\left[\frac{2t(z^2+zy+y^2)}{z^3y^3}+\frac{2\al(1+zy)}{(1-z^2)(1-y^2)}\right]P_{n}(y)P_{n-1}(y)w(y)dy\nonumber\\[5pt]
&=&\frac{2t}{z^3h_{n-1}}\int_{-1}^{1}\frac{P_{n}(y)P_{n-1}(y)w(y)}{y} dy+\frac{2t}{zh_{n-1}}\int_{-1}^{1}\frac{P_{n}(y)P_{n-1}(y)w(y)}{y^3} dy\nonumber\\[5pt]
&+&\frac{2\al z}{(1-z^2)h_{n-1}}\int_{-1}^{1}\frac{yP_{n}(y)P_{n-1}(y)w(y)}{1-y^2} dy.\nonumber
\eea
In view of (\ref{or}) and (\ref{expan}), it follows that
\bea\label{gf}
\frac{1}{h_{n-1}}\int_{-1}^{1}\frac{P_{n}(y)P_{n-1}(y)w(y)}{y} dy&=&\left\{
\begin{aligned}
&0,&n=0,2,4,\ldots,\\
&1,&n=1,3,5,\ldots
\end{aligned}
\right.\nonumber\\[5pt]
&=&\frac{1-(-1)^n}{2}.
\eea
By integration by parts, we have
\bea
\frac{2\al}{h_{n-1}}\int_{-1}^{1}\frac{yP_{n}(y)P_{n-1}(y)w(y)}{1-y^2} dy&=&-\frac{1}{h_{n-1}}\int_{-1}^{1}P_{n}(y)P_{n-1}(y)\mathrm{e}^{-\frac{t}{y^2}} d(1-y^2)^\al\nonumber\\[5pt]
&=&\frac{1}{h_{n-1}}\int_{-1}^{1}(1-y^2)^\al \left(P_{n}(y)P_{n-1}(y)\mathrm{e}^{-\frac{t}{y^2}}\right)' dy\nonumber\\[5pt]
&=&n+\frac{2t}{h_{n-1}}\int_{-1}^{1}\frac{P_{n}(y)P_{n-1}(y)w(y)}{y^3} dy\nonumber\\[5pt]
&=&n+r_n(t).\nonumber
\eea
Then we arrive at formula (\ref{bnz}). This completes the proof.
\end{proof}

Substituting (\ref{anz}) and (\ref{bnz}) into ($S_{1}$), we find
\be\label{s1}
R_{n}(t)=r_{n+1}(t)+r_{n}(t).
\ee
Similarly, substituting (\ref{anz}) and (\ref{bnz}) into ($S_{2}$) gives the following two equations:
\be\label{s21}
\bt_{n+1}R_{n+1}(t)-\bt_{n}R_{n-1}(t)=2(-1)^nt,
\ee
\be\label{s22}
\bt_{n+1}(2n+3+2\al+R_{n+1}(t))-\bt_{n}(2n-1+2\al+R_{n-1}(t))=1+r_{n+1}(t)-r_n(t).
\ee
The combination of (\ref{s21}) and (\ref{s22}) shows
$$
2(-1)^nt+(2n+3+2\al)\bt_{n+1}-(2n-1+2\al)\bt_{n}=1+r_{n+1}(t)-r_n(t),
$$
i.e.,
$$
2(-1)^nt+\left[(2n+1+2\al)\bt_{n+1}-(2n-1+2\al)\bt_{n}\right]+2\bt_{n+1}=1+r_{n+1}(t)-r_n(t).
$$
A telescopic sum gives
\be\label{s2}
\left[1-(-1)^n\right]t+(2n+1+2\al)\bt_n-2\mathrm{p}(n,t)=n+r_n(t),
\ee
where we have used (\ref{sum}) and the initial conditions $\bt_0=0, r_0(t)=0$.

Finally, from ($S_{2}'$), we obtain the following three equations:
\be\label{s2p1}
-2(-1)^n t\: r_{n}(t)=\beta_{n}R_{n}(t)R_{n-1}(t),
\ee
\be\label{s2p2}
(n+r_{n}(t))^2+2\al(n+r_{n}(t))=\beta_{n}\left(2n+1+2\al+R_{n}(t)\right)\left(2n-1+2\al+R_{n-1}(t)\right),
\ee
\bea\label{s2p3}
&&r_{n}^2(t)-2(-1)^n t(n+r_{n}(t))+2\al\left[1-(-1)^n\right]t+\sum_{j=0}^{n-1}R_{j}(t)\nonumber\\
&=&\beta_{n}\left(2n+1+2\al+R_{n}(t)\right)R_{n-1}(t)+\beta_{n}R_{n}(t)\left(2n-1+2\al+R_{n-1}(t)\right).
\eea
The combination of (\ref{s2p1}) and (\ref{s2p2}) gives an important formula:
\bea\label{imp}
&&(2n+1+2\al)\bt_nR_{n-1}(t)+(2n-1+2\al)\bt_nR_{n}(t)\nonumber\\
&=&(n+r_{n}(t))^2+2\al(n+r_{n}(t))+2(-1)^n t\: r_n(t)-(2n+1+2\al)(2n-1+2\al)\bt_n.
\eea
Furthermore, substituting $\bt_nR_{n-1}(t)=-2(-1)^n t\: r_{n}(t)/R_n(t)$ into (\ref{imp}), we obtain the expression of $\bt_n$ in terms of $R_n(t)$ and $r_n(t)$:
\be\label{be3}
\bt_n=\frac{(n+r_{n}(t))^2+2\al(n+r_{n}(t))}{(2n-1+2\al)(2n+1+2\al+R_{n}(t))}+\frac{2(-1)^n t\: r_{n}(t)}{(2n-1+2\al)R_{n}(t)}.
\ee
\begin{theorem}
The auxiliary quantities $r_n(t)$ and $R_n(t)$ satisfy the following second-order difference equations:
\bea\label{rn}
&&(n+r_{n})(n+2\al+r_{n})(r_{n+1}+r_{n})(r_n+r_{n-1})\nonumber\\
&+&2(-1)^n t\: r_{n}(2n-1+2\al+r_n+r_{n-1})(2n+1+2\al+r_{n+1}+r_{n})=0,
\eea\\[-40pt]
\begin{small}
\bea\label{Rn}
&&\Big\{\big[R_{n+1}R_n^4+ 2\big((2 n+2 \alpha+1) R_{n+1}- (-1)^nt(2 n+2 \alpha+3)\big)R_n^3+\big( (4 n^2+4n(2 \alpha +1)\nonumber\\
&-&2 t^2+2 (-1)^n t+2 \alpha ^2+4 \alpha+1)R_{n+1}-2 t (2 n+2 \alpha+3) (t+(-1)^n(3n+3 \alpha +1))\big)R_n^2\nonumber\\
&+&2 t (2 n+2 \alpha+1)\big( ((-1)^n-2 t)R_{n+1}-(2 n+2 \alpha+3) (2 t+(-1)^n (n+\al))\big) R_n\nonumber\\
&-&2 t^2 (2 n+2 \alpha+1)^2 (2n+2\al+3+R_{n+1})\big]R_{n-1}+2 t (2 n+2 \alpha-1) (2n+2 \alpha+1 +R_n)\nonumber\\
&\times&\big[(-1)^n R_{n+1}R_n^2+ \big( ((-1)^n (n+\al+1)-t)R_{n+1}-t (2 n+2 \alpha+3)\big)R_n-t (2 n+2 \alpha+1)\nonumber\\
&\times&(2n+2 \alpha+3 +R_{n+1})\big]\Big\}^2=4\Big\{\big[(-1)^n t (2 n+2 \alpha-1) (2n+2 \alpha+1 +R_n)\nonumber\\
&+& \big( (n+\alpha +(-1)^n t)R_n+(-1)^n t (2 n+2 \alpha+1)\big)R_{n-1}\big]^2- n (n+2 \alpha) R_{n-1}^2 R_n^2\Big\}\nonumber\\
&\times&\Big\{\big[(-1)^n t (2 n+2 \alpha+1) (2n+2 \alpha+3 +R_{n+1})- \big( (n+\alpha+1 -(-1)^n t)R_{n+1}\nonumber\\
&-&(-1)^n t (2 n+2 \alpha+3)\big)R_{n}\big]^2-(n+1) (n+2 \alpha+1) R_n^2 R_{n+1}^2\Big\}.
\eea
\end{small}
The initial conditions are given by
$$
r_0(t)=0,\\[10pt]
$$
\be\label{ic1}
r_1(t)=R_0(t)=\frac{2\sqrt{t}\left[\Gamma\left(\frac{3}{2}+\al\right)\Phi\left(-\al,\frac{1}{2};-t\right)-(1+2\al)\sqrt{t}\:\Gamma(1+\al)\Phi\left(\frac{1}{2}-\al,\frac{3}{2};-t\right)
\right]}{\Gamma(1+\al)\Phi\left(-\frac{1}{2}-\al,\frac{1}{2};-t\right)-2\sqrt{t}\:\Gamma\left(\frac{3}{2}+\al\right)\Phi\left(-\al,\frac{3}{2};-t\right)},\\[10pt]
\ee
\be\label{ic2}
R_1(t)=\frac{6t\left[(3+2\al)\Gamma(1+\al)\Phi\left(-\frac{1}{2}-\al,\frac{1}{2};-t\right)-4\sqrt{t}\:\Gamma\left(\frac{5}{2}+\al\right)\Phi\left(-\al,\frac{3}{2};-t\right)
\right]}{3\Gamma(1+\al)\Phi\left(-\frac{3}{2}-\al,-\frac{1}{2};-t\right)+8t\sqrt{t}\:\Gamma\left(\frac{5}{2}+\al\right)\Phi\left(-\al,\frac{5}{2};-t\right)},
\ee
where $\Phi(\cdot, \cdot; \cdot)$ is the confluent hypergeometric function.
\end{theorem}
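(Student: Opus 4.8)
The plan splits into three parts: a short derivation of (\ref{rn}), a longer derivation of (\ref{Rn}), and the evaluation of the initial conditions. For (\ref{rn}): equations (\ref{s2p1}) and (\ref{s2p2}) each express $\bt_n$ as a rational function of $R_n,R_{n-1},r_n$ (and $n,\al,t$), so eliminating $\bt_n$ by dividing one by the other yields
\be
(n+r_n)(n+2\al+r_n)\,R_n R_{n-1}+2(-1)^n t\, r_n\,(2n+1+2\al+R_n)(2n-1+2\al+R_{n-1})=0.\nonumber
\ee
Substituting $R_n=r_{n+1}+r_n$ and $R_{n-1}=r_n+r_{n-1}$ from (\ref{s1}) turns this into (\ref{rn}) at once. (For $t>0$ one has $\bt_n,R_n,R_{n-1}>0$ and, by (\ref{s2p1}), $r_n\neq0$, so these divisions are legitimate.)

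For (\ref{Rn}) the strategy is to regard (\ref{rn}), after using (\ref{s1}), as a quadratic equation for $r_n$, to produce a companion quadratic for $r_n$ by shifting the index, and to eliminate $r_n$. Concretely, putting $r_{n-1}=R_{n-1}-r_n$ and $r_{n+1}=R_n-r_n$ turns (\ref{rn}) into a quadratic
\be
R_n R_{n-1}\,r_n^2+b_n\,r_n+n(n+2\al)\,R_n R_{n-1}=0,\nonumber
\ee
with $b_n$ polynomial in $R_{n-1},R_n$ (and $n,\al,t$); replacing $n$ by $n+1$ in (\ref{rn}) and using (\ref{s1}) in the forms $r_{n+2}+r_{n+1}=R_{n+1}$, $r_{n+1}+r_n=R_n$, $r_{n+1}=R_n-r_n$ produces a second quadratic in $r_n$ with coefficients polynomial in $R_n,R_{n+1}$. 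Solving both quadratics by the quadratic formula, imposing the consistency relation $r_n+r_{n+1}=R_n$, clearing the denominators $R_nR_{n-1}$ and $R_nR_{n+1}$, then isolating one of the two square roots and squaring, and finally isolating the remaining product of square roots and squaring again, one arrives at an identity of the shape $\{\,\cdots\,\}^2=4\{\,\cdots\,\}\{\,\cdots\,\}$; after cancelling the common polynomial factor $R_{n-1}^2R_{n+1}^2$ (and the attendant overall factor) this is exactly (\ref{Rn}), the two factors on the right being one quarter of the discriminants of the two quadratics. (Equivalently, this is $\mathrm{Res}_{r_n}$ of the two quadratics, organized so as to display those discriminants.) The computation is mechanical, but the intermediate polynomials in $R_{n-1},R_n,R_{n+1}$ are bulky, and assembling the resultant into the stated compact factored form — in particular recognising the right-hand factors as the scaled discriminants — is the main obstacle and is most safely carried out with a computer algebra system.

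For the initial conditions, our convention $r_0(t)=0$ together with (\ref{s1}) at $n=0$ gives $r_1(t)=R_0(t)$. From the definition (\ref{Rnt}), since $P_0(x,t)\equiv1$ (hence $h_0=\mu_0$) and $P_1(x,t)=x$ (hence $h_1=\mu_2$),
\be
R_0(t)=\frac{2t}{\mu_0}\int_{-1}^{1}\frac{w(y,t)}{y^2}\,dy,\qquad R_1(t)=\frac{2t\,\mu_0}{\mu_2}.\nonumber
\ee
Evaluating these in confluent hypergeometric form — using the $j=0$ and $j=2$ instances of the moment formula for $\mu_j$ together with the analogous evaluation of $\int_{-1}^{1}y^{-2}w(y,t)\,dy$, which converges since $\mathrm{e}^{-t/y^2}/y^2\to0$ as $y\to0$ — yields (\ref{ic1}) and (\ref{ic2}).
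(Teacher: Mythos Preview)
Your proof is correct and follows essentially the same route as the paper: eliminate $\bt_n$ from (\ref{s2p1}) and (\ref{s2p2}) to get the quadratic in $r_n$ with coefficients in $R_{n-1},R_n$, then use (\ref{s1}) together with the index shift to obtain (\ref{Rn}) after clearing square roots; the initial conditions come from direct evaluation at $n=0,1$. Your framing of the right-hand factors of (\ref{Rn}) as the (quarter-)discriminants of the two quadratics is a nice structural observation that the paper does not make explicit, but the underlying computation is the same. One small redundancy: once you have substituted $r_{n+1}=R_n-r_n$ to obtain the second quadratic in $r_n$, the ``consistency relation $r_n+r_{n+1}=R_n$'' has already been used, so there is nothing further to impose---you simply eliminate $r_n$ between the two quadratics (equivalently, take the resultant).
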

\begin{proof}
Substituting (\ref{s1}) into (\ref{s2p1}), we have
\be\label{d1}
\bt_n=-\frac{2(-1)^n t\: r_{n}(t)}{(r_{n+1}(t)+r_{n}(t))(r_{n}(t)+r_{n-1}(t))}.
\ee
The combination of (\ref{d1}) and (\ref{be3}) gives (\ref{rn}) with the aid of (\ref{s1}). On the other hand, eliminating $\bt_n$ from (\ref{s2p1}) and (\ref{s2p2}), we find
\bea
&&\left[(n+r_{n}(t))^2+2\al(n+r_{n}(t))\right]R_{n}(t)R_{n-1}(t)\nonumber\\
&+&2(-1)^n t\: r_n(t)\left(2n+1+2\al+R_{n}(t)\right)\left(2n-1+2\al+R_{n-1}(t)\right)=0.\nonumber
\eea
This equation can be viewed as the qradratic equation for $r_n(t)$. Substituting either solution into (\ref{s1}) and after clearing the square root, we obtain (\ref{Rn}). The initial conditions follow from straightforward computations of (\ref{Rnt}) and (\ref{rnt}) when $n=0$ and 1.
\end{proof}

\section{Coupled Riccati Equations, Painlev\'{e} V}
In this section, we derive the coupled Riccati equations satisfied by the auxiliary quantities $R_n(t)$ and $r_n(t)$ from taking derivatives in the orthogonality conditions. Then we obtain the second-order differential equations for $R_n(t)$ and $r_n(t)$, respectively. We find that $R_n(t)$ is closely related to a particular Painlev\'{e} V transcendent.

We start from taking a derivative with respect to $t$ in the equality
$$
\int_{-1}^{1}P_{n}^2(x,t)(1-x^2)^\al\mathrm{e}^{-\frac{t}{x^{2}}}dx=h_{n}(t),\;\;n=0,1,2,\ldots,
$$
and have
$$
h_n'(t)=-\int_{-1}^{1}\frac{P_{n}^2(x,t)w(x)}{x^2}dx.
$$
It follows that
\be\label{eq1}
2t \frac{d}{dt}\ln h_{n}(t)=-R_{n}(t).
\ee
In view of (\ref{be2}), we find
$$
2t\frac{d}{dt}\ln\beta_{n}(t)=R_{n-1}(t)-R_{n}(t).
$$
Then
\be\label{eq2}
2t\beta_{n}'(t)=\beta_{n}R_{n-1}(t)-\beta_{n}R_{n}(t).
\ee

On the other hand, taking a derivative in the equation
$$
\int_{-1}^{1}P_{n}(x,t)P_{n-2}(x,t)(1-x^2)^\al\mathrm{e}^{-\frac{t}{x^{2}}}dx=0,\;\;n=1,2,\ldots,
$$
we have
\be\label{pn}
\frac{d}{dt}\mathrm{p}(n,t)=\frac{1}{h_{n-2}}\int_{-1}^{1}\frac{P_{n}(x,t)P_{n-2}(x,t)w(x)}{x^2}dx.
\ee
From the three-term recurrence relation (\ref{rr}) and exploiting (\ref{be2}), we get
\be\label{rr1}
\frac{P_{n-2}(x,t)}{h_{n-2}}=\frac{xP_{n-1}(x,t)}{h_{n-1}}-\frac{P_{n}(x,t)}{h_{n-1}}.
\ee
Substituting (\ref{rr1}) into (\ref{pn}), we obtain
\be\label{pnt}
2t\frac{d}{dt}\mathrm{p}(n,t)=\left[1-(-1)^n\right]t-\beta_{n}R_{n}(t),
\ee
where we have used formula (\ref{gf}).

\begin{proposition} The auxiliary quantities $r_n(t)$ and $R_n(t)$ satisfy the coupled Riccati equations:
\be\label{ricca1}
2t\:r_{n}'(t)=-\frac{2(-1)^nt\:r_n(t) (2n+1+2\al+R_n(t))}{R_{n}(t)}-\frac{(n+r_{n}(t))(n+2\al+r_{n}(t))R_{n}(t)}{2n+1+2\al+R_{n}(t)},
\ee
\be\label{ricca2}
2tR_{n}'(t)=R_{n}^2(t)+\left[1-2(-1)^n t-2r_{n}(t)\right]R_{n}(t)-2(-1)^n(2n+2\al+1) t.
\ee
\end{proposition}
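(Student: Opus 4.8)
The plan is to derive both Riccati equations from material already in place, without ever differentiating the integrals that define $R_n(t)$ and $r_n(t)$. The ingredients I would use are: the derivative identities $2t\,\frac{d}{dt}\ln h_n(t)=-R_n(t)$, the relation $2t\beta_n'(t)=\beta_nR_{n-1}(t)-\beta_nR_n(t)$ from (\ref{eq2}), and $2t\,\frac{d}{dt}\mathrm{p}(n,t)=[1-(-1)^n]t-\beta_nR_n(t)$ from (\ref{pnt}); the telescoped relation (\ref{s2}); and the algebraic consequences of the supplementary conditions, in particular (\ref{s1}), (\ref{s21}), (\ref{s22}), (\ref{s2p1}) and the closed form (\ref{be3}) for $\beta_n$.

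First I would differentiate (\ref{s2}) in $t$, multiply by $2t$, and substitute (\ref{eq2}) for $2t\beta_n'(t)$ and (\ref{pnt}) for $2t\,\frac{d}{dt}\mathrm{p}(n,t)$. The $[1-(-1)^n]t$ contributions cancel, and after collecting the $\beta_nR_n(t)$ terms one is left with the intermediate identity
\[
2t\,r_n'(t)=(2n+1+2\al)\beta_nR_{n-1}(t)-(2n-1+2\al)\beta_nR_n(t).
\]
To obtain (\ref{ricca1}) from this, I would insert $\beta_nR_{n-1}(t)=-2(-1)^nt\,r_n(t)/R_n(t)$, which is exactly (\ref{s2p1}), together with $\beta_nR_n(t)=\frac{(n+r_n)(n+2\al+r_n)R_n}{(2n-1+2\al)(2n+1+2\al+R_n)}+\frac{2(-1)^nt\,r_n}{2n-1+2\al}$, obtained by multiplying (\ref{be3}) by $R_n(t)$. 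The factors $2n\pm1+2\al$ cancel, and recombining the two $(-1)^nt\,r_n(t)$ terms over the common denominator $R_n(t)$ produces (\ref{ricca1}).

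For (\ref{ricca2}) I would use $R_n(t)=r_{n+1}(t)+r_n(t)$ from (\ref{s1}), so that $2tR_n'(t)=2t\,r_{n+1}'(t)+2t\,r_n'(t)$, and apply the intermediate identity above at indices $n$ and $n+1$. Regrouping, the right-hand side becomes $\big[(2n+3+2\al)\beta_{n+1}-(2n-1+2\al)\beta_n\big]R_n(t)-(2n+1+2\al)\big[\beta_{n+1}R_{n+1}(t)-\beta_nR_{n-1}(t)\big]$. The second bracket equals $2(-1)^nt$ by (\ref{s21}); subtracting (\ref{s21}) from (\ref{s22}) shows the first bracket equals $1+r_{n+1}(t)-r_n(t)-2(-1)^nt$, and one more application of (\ref{s1}) turns this into $R_n(t)+1-2r_n(t)-2(-1)^nt$. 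Substituting both yields (\ref{ricca2}).

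I expect the only delicate point to be the bookkeeping in the differentiation of (\ref{s2}): the sign $(-1)^n$ and the $2\al$-shifts must be tracked exactly when substituting (\ref{eq2}) and (\ref{pnt}), since an off-by-one there propagates into both Riccati equations. There is no genuine analytic obstacle, since the substantive work --- the integration-by-parts evaluations of $A_n(z)$ and $B_n(z)$, the supplementary conditions, and the derivative formulas (\ref{eq2}), (\ref{pnt}) --- has already been carried out.
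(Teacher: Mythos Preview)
Your derivation of (\ref{ricca1}) is exactly the paper's: differentiate (\ref{s2}), use (\ref{pnt}) and (\ref{eq2}) to reach the intermediate identity $2t\,r_n'=(2n+1+2\al)\beta_nR_{n-1}-(2n-1+2\al)\beta_nR_n$, then substitute (\ref{s2p1}) and (\ref{be3}).

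For (\ref{ricca2}), however, you take a genuinely different and cleaner route. The paper substitutes the closed form (\ref{be3}) for $\beta_n$ back into (\ref{for}), differentiates, replaces $2t\,r_n'$ via (\ref{ricca1}), and arrives at a \emph{product} of two factors; the spurious algebraic factor must then be ruled out by checking a special value such as $n=1$. Your argument instead writes $2tR_n'=2t\,r_{n+1}'+2t\,r_n'$ via (\ref{s1}), applies the intermediate identity at indices $n$ and $n+1$, and reduces everything with the $(S_2)$ identities (\ref{s21}) and (\ref{s22}). This is pure bookkeeping with no differentiation of (\ref{be3}) and no extraneous factor to discard, so it is both shorter and more transparent. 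The paper's approach, on the other hand, stays at a single index $n$ throughout and never invokes the shifted relations, which keeps it self-contained in a different sense; but the price is the factorization step and its ad hoc verification.
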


\begin{proof}
By taking a derivative in (\ref{s2}) and with the aid of (\ref{pnt}), we have
\be\label{for}
2t\: r_n'(t)=2(2n+1+2\al)t\bt_n'(t)+2\bt_nR_n(t).
\ee
Replacing $2t\bt_n'(t)$ by (\ref{eq2}), equation (\ref{for}) becomes
\bea
2t\:r_n'(t)&=&(2n+1+2\al)\bt_nR_{n-1}(t)-(2n-1+2\al)\bt_nR_{n}(t)\nonumber\\
&=&-\frac{2(-1)^n (2n+1+2\al)t\:r_n(t)}{R_{n}(t)}-(2n-1+2\al)\bt_nR_{n}(t).\nonumber
\eea
where we have used (\ref{s2p1}) in the second step. Substituting the expression of $\bt_n$ in (\ref{be3}) into the above equality, we obtain (\ref{ricca1}).

On the other hand, substituting (\ref{be3}) into (\ref{for}) to eliminate $\bt_n$ and $\bt_n'(t)$, and then replacing $2t\: r_n'(t)$ by (\ref{ricca1}), we find
\bea
&&\Big\{2tR_{n}'(t)-R_{n}^2(t)-\left[1-2(-1)^n t-2r_{n}(t)\right]R_{n}(t)+2(-1)^n(2n+2\al+1) t\Big\}\nonumber\\
&\times&\Big\{ \left[r_n^2(t)+2  \left(n+\alpha +(-1)^n t\right)r_n(t)+n (n+2 \alpha)\right]R_n^2(t)+4 (-1)^n t (2 n+2 \alpha +1) r_n(t) R_n(t)\nonumber\\
&&+2 (-1)^n t (2 n+2 \alpha +1)^2 r_n(t)\Big\}=0.\nonumber
\eea
Obviously, the above formula yields two equations. The latter algebraic equation does not hold and should be discarded, which can be verified by taking special values such as $n=1$. Hence we obtain (\ref{ricca2}). The proof is complete.
\end{proof}

\begin{theorem}
The auxiliary quantity $R_n(t)$ satisfies the following second-order nonlinear ordinary differential equation,
\bea\label{ode}
&&8 t^2 R_n(t)  (2 n+2 \alpha+1+R_n(t))R_n''(t)-4 t^2  (4 n+4 \alpha +2+3 R_n(t))(R_n'(t))^2\nonumber\\
&+&8 t  (2 n+2 \alpha +1+R_n(t))R_n(t) R_n'(t)-R_n^5(t)-2 (2 n+2 \alpha+1) R_n^4(t)\nonumber\\
&-& \left[4 n^2+4(2 \alpha +1) n+4 \alpha +1-4 t^2-4 (-1)^n t\right]R_n^3(t)+8 t (2 n+2 \alpha +1) \left[(-1)^n+2 t\right] R_n^2(t)\nonumber\\
&+&4 t (2 n+2 \alpha +1)^2 \left[(-1)^n+5 t\right] R_n(t)+8 t^2 (2 n+2 \alpha +1)^3=0.
\eea
Let $S_n(t):=1+\frac{R_{n}(t)}{2n+2\al+1}$. Then $S_n(t)$ satisfies a particular Painlev\'{e} V equation,
\bea\label{pv}
S_{n}''(t)&=&\frac{(3 S_n(t)-1) (S_n'(t))^2}{2S_n(t) (S_n(t)-1) }-\frac{S_n'(t)}{t}+\frac{(S_n(t)-1)^2 }{t^2}\left(\frac{(2n+2 \alpha+1)^2 S_n(t)}{8} -\frac{\al^2}{2S_n(t)}\right)\nonumber\\[10pt]
&-&\frac{(-1)^n S_n(t)}{2t}-\frac{S_n(t)(S_n(t)+1) }{2(S_n(t)-1)},
\eea
i.e. $P_{V}\left(\frac{(2n+2 \alpha+1)^2}{8}, -\frac{\al^2}{2}, -\frac{(-1)^n}{2}, -\frac{1}{2}\right)$, following the convention of \cite{Gromak}.
\end{theorem}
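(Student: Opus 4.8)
The plan is to derive (\ref{ode}) by eliminating the auxiliary quantity $r_n(t)$ between the coupled Riccati equations (\ref{ricca1}) and (\ref{ricca2}), and then to obtain (\ref{pv}) from (\ref{ode}) through the rational change of dependent variable $R_n(t)=(2n+2\alpha+1)\big(S_n(t)-1\big)$.

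First I would treat (\ref{ricca2}) as an algebraic relation that expresses $r_n$ linearly in $R_n$ and $R_n'$: since $R_n(t)\neq 0$ for $t>0$,
\[
r_n(t)=\frac{R_n^2(t)+\big[1-2(-1)^n t\big]R_n(t)-2(-1)^n(2n+2\alpha+1)t-2tR_n'(t)}{2R_n(t)}.
\]
Differentiating this in $t$ produces $r_n'(t)$ as a rational function of $R_n,R_n',R_n''$ and $t$. Substituting both $r_n$ and $r_n'$ into (\ref{ricca1}) and clearing the denominators, which are only powers of $R_n$ together with the factor $2n+1+2\alpha+R_n$, leaves a polynomial identity in $R_n,R_n',R_n''$ and $t$; this identity is precisely (\ref{ode}). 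The quintic dependence on $R_n$ in (\ref{ode}) arises from the term $(n+r_n)(n+2\alpha+r_n)R_n/(2n+1+2\alpha+R_n)$ of (\ref{ricca1}) after the substitution, so the degree is exactly as anticipated, and --- in contrast to the derivation of (\ref{ricca2}) itself --- no spurious branch needs to be discarded at this stage. The main obstacle is simply the bulk of the computation: one has to expand and cancel a considerable number of monomials, so it is advisable to confirm the outcome either with a computer algebra system or by matching the first few Taylor coefficients in $t$ against the explicit initial data for $R_0(t)$ and $R_1(t)$ recorded after (\ref{ic1})--(\ref{ic2}).

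For the Painlev\'{e} identification, set $M:=2n+2\alpha+1$ and substitute $R_n=M(S_n-1)$, so that $M+R_n=MS_n$, $2M+R_n=M(S_n+1)$, $4n+4\alpha+2+3R_n=M(3S_n-1)$, $R_n'=MS_n'$ and $R_n''=MS_n''$. Inserting these into (\ref{ode}) and dividing through by $8t^2M^3S_n(S_n-1)$ isolates $S_n''$, because the coefficient $8t^2R_n(M+R_n)$ of $R_n''$ becomes $8t^2M^2S_n(S_n-1)$. A short computation shows that the $(R_n')^2$-term then contributes $\dfrac{(3S_n-1)(S_n')^2}{2S_n(S_n-1)}=\Big(\dfrac{1}{2S_n}+\dfrac{1}{S_n-1}\Big)(S_n')^2$ and the $R_nR_n'$-term contributes $-S_n'/t$. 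The remaining part of (\ref{ode}) is a polynomial in $R_n$ which, after grouping by powers of $t$, using $4n^2+4(2\alpha+1)n+4\alpha+1=M^2-4\alpha^2$, and invoking the factorizations
\begin{gather*}
R_n^3+2MR_n^2+M^2R_n=R_n(R_n+M)^2,\\
R_n^3+4MR_n^2+5M^2R_n+2M^3=(R_n+M)^2(R_n+2M),
\end{gather*}
collapses to $-M^3(S_n-1)^3\big(M^2S_n^2-4\alpha^2\big)$ for its $t$-free part, $4(-1)^n tM^3S_n^2(S_n-1)$ for its part linear in $t$, and $4t^2M^3S_n^2(S_n+1)$ for its part quadratic in $t$; dividing by $8t^2M^3S_n(S_n-1)$ and transposing to the right-hand side turns these, respectively, into
\[
\frac{(S_n-1)^2}{t^2}\Big(\frac{(2n+2\alpha+1)^2S_n}{8}-\frac{\alpha^2}{2S_n}\Big),\qquad -\frac{(-1)^nS_n}{2t},\qquad -\frac{S_n(S_n+1)}{2(S_n-1)}.
\]
This is exactly (\ref{pv}), and comparison with the canonical form of $P_{V}$ from \cite{Gromak}, namely $y''=\big(\tfrac{1}{2y}+\tfrac{1}{y-1}\big)(y')^2-\tfrac{y'}{t}+\tfrac{(y-1)^2}{t^2}\big(Ay+\tfrac{B}{y}\big)+\tfrac{Cy}{t}+\tfrac{Dy(y+1)}{y-1}$, lets one read off $A=\tfrac{(2n+2\alpha+1)^2}{8}$, $B=-\tfrac{\alpha^2}{2}$, $C=-\tfrac{(-1)^n}{2}$, $D=-\tfrac12$. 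The only delicate point here is confirming these factorizations, which is a finite partial-fraction verification once (\ref{ode}) is in hand.
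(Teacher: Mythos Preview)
Your proposal is correct and follows essentially the same route as the paper: solve (\ref{ricca2}) for $r_n$ linearly, substitute this (and its derivative) into (\ref{ricca1}) to obtain (\ref{ode}), then apply the substitution $R_n=(2n+2\alpha+1)(S_n-1)$ to reach (\ref{pv}). The paper's own proof is considerably more terse---it records the expression (\ref{rexp}) for $r_n$, states that substitution into (\ref{ricca1}) yields (\ref{ode}), and then simply asserts that the linear change of variable gives (\ref{pv})---whereas you supply the intermediate factorizations and the term-by-term bookkeeping for the Painlev\'{e} V identification, which is a welcome level of detail.
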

\begin{proof}
Note that equation (\ref{ricca2}) can be viewed as a linear equation for $r_n(t)$. Solving $r_{n}(t)$ from (\ref{ricca2}) we have
\be\label{rexp}
r_n(t)=\frac{1-2 (-1)^n t}{2}+\frac{R_n(t)}{2}-\frac{t \left[(-1)^n (2 n+1+2 \alpha)+R_n'(t)\right]}{R_n(t)}.
\ee
Substituting it into (\ref{ricca1}), we obtain (\ref{ode}). With the linear transformation $S_n(t)=1+\frac{R_{n}(t)}{2n+2\al+1}$, we obtain the Painlev\'{e} V equation (\ref{pv}).
\end{proof}

\begin{theorem}
The quantity $r_n(t)$ satisfies the second-order nonlinear differential equation:
\begin{small}
\bea\label{ode2}
&&4 t^3 (r_n''(t))^2+4t^2\big[ r_n'(t)-2(-1)^n \left(3r_n^2(t)+4(n+\alpha) r_n(t)+n  (n+2 \alpha)\right)\big]r_n''(t)\nonumber\\
&-&t\big[4r_n^2(t)-8 (-1)^n t r_n(t)+4 t^2-1\big] (r_n'(t))^2-4(-1)^nt\big[3r_n^2(t)+4   (n+\alpha)r_n(t)+ n  (n+2 \alpha)\big]r_n'(t)\nonumber\\
&+&8 (-1)^n r_n^5(t)+4  \big[4 (-1)^n (n+\al)+5 t\big]r_n^4(t)+8\big[(-1)^n (n^2+2n\alpha+t^2)+8 t (n+\alpha)\big]r_n^3(t)\nonumber\\
&+&8t\big[2(-1)^nt(n+\al)+(3n+2\al)(3n+4\al)\big]r_n^2(t)+8 n (n+2 \alpha)t \big[4 n+4 \alpha +(-1)^n t\big]r_n(t)\nonumber\\
&+&4 n^2 (n+2 \alpha)^2t=0.
\eea
\end{small}
\end{theorem}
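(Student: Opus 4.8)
The plan is to eliminate $R_n(t)$ from the coupled Riccati system (\ref{ricca1})--(\ref{ricca2}) while retaining $r_n(t)$; this is dual to the derivation of the equation (\ref{ode}) for $R_n(t)$, where instead one solved (\ref{ricca2}) for $r_n(t)$ and fed the result into (\ref{ricca1}). First I would clear denominators in (\ref{ricca1}): multiplying through by $R_n(t)\big(2n+1+2\al+R_n(t)\big)$ turns it into a quadratic equation in $R_n(t)$,
\be\label{plan-E1}
A\,R_n^2(t)+B\,R_n(t)+C=0,
\ee
whose coefficients are explicit polynomials in $t,\,r_n(t),\,r_n'(t)$, namely $A=2t r_n'(t)+2(-1)^n t\,r_n(t)+(n+r_n(t))(n+2\al+r_n(t))$, $B=2(2n+1+2\al)t\big(r_n'(t)+2(-1)^n r_n(t)\big)$, and $C=2(-1)^n(2n+1+2\al)^2 t\,r_n(t)$.

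Next I would differentiate (\ref{plan-E1}) with respect to $t$; this produces a relation linear in $R_n'(t)$ and at most quadratic in $R_n(t)$, whose coefficients now also involve $r_n''(t)$ through $A'$ and $B'$. Into it I substitute for $R_n'(t)$ from (\ref{ricca2}), i.e.\ $2tR_n'(t)=R_n^2(t)+[1-2(-1)^n t-2r_n(t)]R_n(t)-2(-1)^n(2n+2\al+1)t$, so that $R_n'(t)$ is replaced by a quadratic in $R_n(t)$. After clearing the factor $2t$, the differentiated identity becomes a cubic equation in $R_n(t)$,
\be\label{plan-E2}
\widetilde A\,R_n^3(t)+\widetilde B\,R_n^2(t)+\widetilde C\,R_n(t)+\widetilde D=0,
\ee
with $\widetilde A,\widetilde B,\widetilde C,\widetilde D$ polynomial in $t,\,r_n(t),\,r_n'(t),\,r_n''(t)$.

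Finally I would eliminate $R_n(t)$ between (\ref{plan-E1}) and (\ref{plan-E2}). Concretely, one reduces (\ref{plan-E2}) modulo (\ref{plan-E1}) using $R_n^2(t)\equiv-\big(B R_n(t)+C\big)/A$ to obtain a relation $\mathcal P\,R_n(t)+\mathcal Q=0$ that is linear in $R_n(t)$, and then substitutes $R_n(t)=-\mathcal Q/\mathcal P$ back into (\ref{plan-E1}) to obtain $A\mathcal Q^2-B\mathcal P\mathcal Q+C\mathcal P^2=0$; equivalently, one computes the resultant $\mathrm{Res}_{R_n}$ of (\ref{plan-E1}) and (\ref{plan-E2}). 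This resultant will contain spurious polynomial factors that do not vanish identically (analogous to the extraneous algebraic equation discarded when deriving (\ref{ricca1})--(\ref{ricca2})); these are dropped, and their rejection is confirmed by evaluating at $n=1$ against the explicit initial data (\ref{ic1}). What remains is exactly (\ref{ode2}).

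The first two steps are short; the third is lengthy but purely mechanical and is safely carried out with a computer algebra system, so the main difficulty is organisational --- keeping the coefficient polynomials in hand and correctly isolating the genuine factor of the resultant --- rather than conceptual. A slightly more cumbersome alternative would instead differentiate the expression (\ref{rexp}) for $r_n(t)$ in terms of $R_n(t)$ and $R_n'(t)$, reducing the higher derivatives $R_n''(t)$ and $R_n'''(t)$ via (\ref{ode}) and its derivative; the Riccati elimination above is more direct.
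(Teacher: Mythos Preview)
Your proposal is correct and rests on the same idea as the paper: regard (\ref{ricca1}) as a quadratic in $R_n(t)$ and eliminate $R_n(t)$ with the help of (\ref{ricca2}). The paper does this more directly---it solves the quadratic explicitly (picking up a square root), substitutes the resulting expression for $R_n(t)$ into (\ref{ricca2}) (which, since $R_n'(t)$ appears, brings in $r_n''(t)$), and then squares to clear the radical---whereas you differentiate the quadratic, replace $R_n'(t)$ via (\ref{ricca2}), and take a resultant. These two procedures are algebraically equivalent: clearing a quadratic radical after substitution is precisely computing the resultant with that quadratic. Your route avoids manipulating explicit square roots and their derivatives, at the cost of tracking and discarding spurious resultant factors; the paper's route is shorter to state but hides the same bookkeeping inside ``clearing the square root'' and ``some simplifications.''
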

\begin{proof}
Solving $R_n(t)$ from (\ref{ricca1}), which is actually a quadratic equation for $R_n(t)$, we get two solutions. Substituting either solution into (\ref{ricca2}), and clearing the square root, we obtain (\ref{ode2}) after making some simplifications.
\end{proof}

\begin{remark}
The second-order differential equation (\ref{ode2}) for $r_n(t)$ may be transformed to a Chazy type equation \cite{Chazy1909,Chazy1911,Cosgrove}. See \cite{Min2019,Lyu2017} on Chazy equations satisfied by $r_n(t)$ in other problems.
\end{remark}

\section{Logarithmic Derivative of the Hankel Determinant}
In this section, we study the logarithmic derivative of the Hankel determinant. Usually this quantity satisfies a second-order differential equation, which is related to the Jimbo-Miwa-Okamoto $\s$-form of a particular Painlev\'{e} equation.

We introduce a quantity related to the logarithmic derivative of the Hankel determinant,
\be\label{def}
\s_n(t):=2t\frac{d}{dt}\ln D_n(t).
\ee
In view of (\ref{hankel}) and (\ref{eq1}), we find
\be\label{sn}
\s_n(t)=-\sum_{j=0}^{n-1}R_j(t).
\ee
Then equation (\ref{s2p3}) becomes
\be\label{re}
(2n+1+2\al)\bt_nR_{n-1}(t)+(2n-1+2\al)\bt_nR_{n}(t)=2\al t-2(-1)^n t(n+\al-r_n(t))+r_n^2(t)-\s_n(t),
\ee
where we have used (\ref{s2p1}).

The combination of (\ref{imp}) and (\ref{re}) shows that $\bt_n$ can be expressed in terms of $r_n(t)$ and $\s_n(t)$:
\be\label{be4}
\bt_n=\frac{n(n+2\al)+2(n+\al)r_n(t)+\s_n(t)+2t\left[(-1)^n(n+\al)-\al\right]}{(2n+1+2\al)(2n-1+2\al)}.
\ee
By taking a derivative in (\ref{be4}) and substituting it into (\ref{eq2}), we find
\be\label{re2}
\beta_{n}R_{n-1}(t)-\beta_{n}R_{n}(t)=\frac{2 t \left[2(n+\al)((-1)^n+r_n'(t))+\sigma_n'(t)-2 \alpha\right]}{(2n+1+2\al)(2n-1+2\al)}.
\ee

\begin{theorem}
The quantity $\s_n(t)$ satisfies the following second-order nonlinear ordinary differential equation:
\begin{small}
\bea\label{sode}
&&\Big\{4 t^3 (\sigma _n'')^2- 4 t^2\left[ 2t+2\alpha -2 (-1)^n(n+\al)-\sigma _n'\right]\sigma _n''+8 t^2 (\sigma _n')^3-t\big[4 t^2+40 \alpha\:  t-1+4 \sigma _n\nonumber\\
&+&24 (-1)^n t (n+\alpha)\big](\sigma _n')^2+4t\big[12 \alpha \: t^2 -(20 n^2+40 n\alpha+3) t-\alpha+(-1)^n(n+\al) (12 t^2+1)\nonumber\\
&+&4  \left(\alpha-t +3 (-1)^n(n+\al)\right)\sigma _n\big]\sigma _n'+8 [t-2 (-1)^n(n+\al)]\sigma _n^2+4t\big[2 t^2+1+14 n^2+28   n\alpha\nonumber\\
&+&8 \alpha ^2-4(-1)^n (n+\alpha) (3 t+2 \alpha)\big]\sigma _n-4t\big[4 \alpha\:  t^3- 2t^2(7 n^2+14   n\al+1)-4 \alpha t (3 n^2+6   n\alpha+1)\nonumber\\
&-&n^2-2   n\alpha-2 \alpha ^2+2(-1)^n\left(2t^3 ( n+\al)+2\left(3 n^3+9   n^2\alpha+ n(6 \alpha ^2+1)+\alpha\right)t +n\alpha  +\alpha ^2\right)\big]\Big\}^2\nonumber\\
&=&16 \Big\{t[t+2 \alpha-2 \sigma _n' -2 (-1)^n(n+\al)]+\sigma _n\Big\}\Big\{2 (-1)^n t^2 \sigma _n''+t\big[(-1)^n(4 n^2+8n\al-8\alpha t+1+4 \sigma _n)\nonumber\\
&-&8t(n+\al)\big]\sigma _n'-2 (-1)^n \sigma _n^2+2\left[4t(n+\al)-(-1)^n(n^2+2 n\alpha  +t^2)\right]\sigma _n+2t\big[(n+\alpha )(2 n^2+4  n\alpha\nonumber\\
&+&2 t^2+1)+(-1)^n\left(2 \alpha\: t^2-(5 n^2+10   n\alpha+1)t-\alpha  (2 n^2+4   n\alpha+1)\right)\big]\Big\}^2.
\eea
\end{small}
\end{theorem}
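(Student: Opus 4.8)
The plan is to eliminate every auxiliary quantity in favour of $\sigma_n(t)$ and its first two derivatives. The key observation is that, by the relations assembled in Sections~2--4, both $\sigma_n$ and $\sigma_n'$ can be written as rational functions of the single pair $(R_n(t),r_n(t))$; since these two variables obey the first-order Riccati system (\ref{ricca1})--(\ref{ricca2}), a third rational expression for $\sigma_n''$ then forces one relation among $\sigma_n$, $\sigma_n'$, $\sigma_n''$ and $t$, and that relation will be (\ref{sode}).

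First I would assemble the building blocks. Equation (\ref{s2p1}) gives $\beta_n R_{n-1}=-2(-1)^n t\,r_n/R_n$ directly, and substituting this into (\ref{re}) isolates $\beta_n R_n$ as an explicit rational function of $r_n$, $R_n$, $\sigma_n$ and $t$. Dividing this expression for $\beta_n R_n$ by the value of $\beta_n$ given in (\ref{be4}) yields one algebraic identity relating $r_n$, $R_n$, $\sigma_n$ and $t$; because $\sigma_n$ appears linearly in both (\ref{re}) and (\ref{be4}), this identity is linear in $\sigma_n$, so it can be solved to give $\sigma_n=\tilde G(r_n,R_n,t)$. Next, substituting the same formulas for $\beta_n R_{n-1}$ and $\beta_n R_n$ into (\ref{re2}) and replacing $r_n'(t)$ there by the right-hand side of the Riccati equation (\ref{ricca1}) would produce $\sigma_n'=\tilde H(r_n,R_n,t)$. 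Differentiating $\tilde H$ once more and using (\ref{ricca1}) and (\ref{ricca2}) to remove $r_n'(t)$ and $R_n'(t)$ would give $\sigma_n''=\tilde K(r_n,R_n,t)$. At this stage $\sigma_n$, $\sigma_n'$ and $\sigma_n''$ are three explicit rational functions of the two variables $(r_n,R_n)$, which is exactly the redundancy that produces a second-order equation.

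The last step is to eliminate $r_n$ and $R_n$ from the three relations $\sigma_n=\tilde G$, $\sigma_n'=\tilde H$, $\sigma_n''=\tilde K$, and this I expect to be the main obstacle. As in the derivations of the difference equation (\ref{Rn}) and the differential equation (\ref{ode2}), the natural route is to take resultants: eliminating $R_n$ between the first and second relations gives a relation $\mathcal P(r_n;\sigma_n,\sigma_n',t)=0$ that is quadratic in $r_n$, while eliminating $R_n$ between the first and third gives $\mathcal Q(r_n;\sigma_n,\sigma_n'',t)=0$. Solving the quadratic $\mathcal P$ for $r_n$ introduces a square root whose radicand turns out to be $Y:=t\big(t+2\alpha-2\sigma_n'(t)-2(-1)^n(n+\alpha)\big)+\sigma_n(t)$; substituting this root into $\mathcal Q$, isolating the surviving factor of $\sqrt{Y}$, and squaring clears it and leaves precisely the factored identity (\ref{sode}) --- whose left-hand side is the square of a quantity quadratic in $\sigma_n''$ and whose right-hand side equals $16\,Y$ times the square of a quantity linear in $\sigma_n''$. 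The genuinely laborious parts are carrying out these resultant computations, for which a computer algebra system is essentially indispensable, and then pinning down the correct irreducible factor: just as one branch had to be rejected in the proof of the coupled Riccati equations, the elimination here will produce a spurious factor that must be discarded, and this can be confirmed by testing a small case such as $n=1$ against the explicit initial data in (\ref{ic1})--(\ref{ic2}).
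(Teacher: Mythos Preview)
Your strategy is sound and would produce the result, but it differs from the paper's route in one essential respect: the paper never eliminates $R_n$ from a resultant at all. Instead it works with the pair $X:=\beta_nR_{n-1}$, $Y:=\beta_nR_n$, solves the {\em linear} system for $X,Y$ coming from (\ref{re}) and (\ref{re2}), and then uses the single product relation $XY=-2(-1)^n t\,\beta_n r_n$ (from (\ref{s2p1})) together with (\ref{be4}) to obtain one equation (\ref{com}) in $r_n,\,r_n',\,\sigma_n,\,\sigma_n'$. The crucial shortcut---which your plan misses---is that combining (\ref{s2}) with (\ref{be4}) gives $\mathrm p(n,t)$ as a rational function of $r_n$ and $\sigma_n$; differentiating and feeding in (\ref{pnt}) and the expression (\ref{y}) for $Y$ makes the $r_n'$ terms cancel identically, leaving the simple quadratic
\[
r_n^2+2(-1)^n t\,r_n-\sigma_n+2t\bigl[(-1)^n(n+\alpha)-\alpha+\sigma_n'\bigr]=0,
\]
whose discriminant is exactly your quantity $Y$. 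Substituting $r_n=-(-1)^nt\pm\sqrt{Y}$ (and its $t$-derivative, which is how $\sigma_n''$ enters) into (\ref{com}) and clearing the square root then gives (\ref{sode}) directly.

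So what you propose---parametrising $\sigma_n,\sigma_n',\sigma_n''$ by $(r_n,R_n)$ via the Riccati system and eliminating by resultants---is a legitimate alternative and correctly anticipates both the radicand $Y$ and the overall squared structure of (\ref{sode}). The paper's route is cleaner because the $\mathrm p(n,t)$ identity delivers the quadratic for $r_n$ immediately, bypassing the resultant step and the need to handle $R_n$ at all; your approach trades that insight for a heavier but more mechanical symbolic computation. Either way one must discard a spurious branch, exactly as you note.
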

\begin{proof}
We define
$$
X:=\bt_nR_{n-1}(t),
$$
$$
Y:=\bt_nR_{n}(t).
$$
Then $X$ and $Y$ satisfy a linear system from (\ref{re}) and (\ref{re2}):
$$
(2n+1+2\al)X+(2n-1+2\al)Y=2\al t-2(-1)^n t(n+\al-r_n(t))+r_n^2(t)-\s_n(t),
$$
$$
X-Y=\frac{2 t \left[2(n+\al)((-1)^n+r_n'(t))+\sigma_n'(t)-2 \alpha\right]}{(2n+1+2\al)(2n-1+2\al)}.
$$
Solving this linear system, we get
\bea\label{x}
X&=&\frac{2 \alpha  t-2 (-1)^n t (n+\alpha-r_n(t))+r_n^2(t)-\sigma_n (t)}{4(n+\al)}\nonumber\\
&+&\frac{t\left[2(n+\al)\left((-1)^n+r_n'(t)\right)+\s_n'(t)-2\al\right]}{2(n+\al)(2n+1+2\al)},
\eea
\bea\label{y}
Y&=&\frac{2 (-1)^n t\: r_n(t)+r_n^2(t)-\sigma_n (t)}{4 (n+\alpha)}\nonumber\\
&-&\frac{ t \left[(2 n+1+2 \alpha) \left((-1)^n (n+\al)-\al\right)+2 (n+\alpha) r_n'(t)+\sigma_n'(t)\right]}{2 (n+\alpha) (2 n-1+2 \alpha)}.
\eea
Noting that from the definition of $X$ and $Y$, we have
\bea\label{xy}
X\cdot Y&=&\bt_n(\bt_nR_{n}(t)R_{n-1}(t))\nonumber\\
&=&-2(-1)^n t\bt_n r_{n}(t).
\eea
where we have used (\ref{s2p1}).

Substituting (\ref{x}), (\ref{y}) and (\ref{be4}) into (\ref{xy}), we obtain an equation satisfied by $r_n(t), r_n'(t), \s_n(t)$ and $\s_n'(t)$:
\begin{small}
\bea\label{com}
&&\big\{(2n+1+2\al)\left[2 \alpha  t-2 (-1)^n t (n+\alpha-r_n(t))+r_n^2(t)-\sigma_n (t)\right]\nonumber\\
&+&2t\left[2(n+\al)\left((-1)^n+r_n'(t)\right)+\s_n'(t)-2\al\right]\big\}\big\{(2n-1+2\al)\left[2 (-1)^n t\: r_n(t)+r_n^2(t)-\sigma_n (t)\right]\nonumber\\
&-&2t \left[(2 n+1+2 \alpha) \left((-1)^n (n+\al)-\al\right)+2 (n+\alpha) r_n'(t)+\sigma_n'(t)\right]\big\}\nonumber\\
&=&-32(-1)^n(n+\al)^2 t\:r_n(t)\big\{n(n+2\al)+2(n+\al)r_n(t)+\s_n(t)+2t\left[(-1)^n(n+\al)-\al\right]\big\}.
\eea
\end{small}\\[-40pt]

On the other hand, the combination of (\ref{s2}) and (\ref{be4}) shows that $\mathrm{p}(n,t)$ can be expressed in terms of $r_n(t)$ and $\s_n(t)$:
$$
\mathrm{p}(n,t)=\frac{n-n^2+\left[2 n-1+(-1)^n\right] t+r_n(t)+\sigma_n (t)}{2 (2 n-1+2 \alpha)}.
$$
Taking a derivative and substituting it into (\ref{pnt}), and replacing $\bt_n R_n(t)$ by (\ref{y}), we find a quadratic equation for $r_n(t)$ (the terms involving $r_n'(t)$ disappear!):
$$
r_n^2(t)+2 (-1)^n t\: r_n(t)-\sigma_n (t)+2 t \left[(-1)^n(n+\al)-\alpha +\sigma_n'(t)\right]=0.
$$
Then we have two solutions,
$$
r_n(t)=-(-1)^n t\pm\sqrt{\sigma_n(t)+t \left[t+2 \alpha -2 (-1)^n(n+\al)-2 \sigma_n '(t)\right]}.
$$
Substituting either solution into (\ref{com}), we obtain (\ref{sode}) after clearing the square root.
\end{proof}

Although the differential equation (\ref{sode}) for $\s_n(t)$ is very complicated, we find that it will be reduced to a very simple form under a suitable double scaling. Furthermore, we will show that $\s_n(t)$ can be expressed in terms of the Jimbo-Miwa-Okamoto $\s$-function of a particular Painlev\'{e} V in next section; see (\ref{rs1}) and (\ref{rs2}).
\begin{corollary}
If $n\rightarrow\infty$ and $t\rightarrow 0$ in the way such that $\lim\limits_{n\rightarrow\infty}n^4 t=s>0$ and we assume the limit
$$
\s(s):=\lim_{n\rightarrow\infty}\s_n\left(\frac{s}{n^4}\right)
$$
exists. Then $\sigma(s)$ satisfies the following second-order differential equation
\be\label{sf}
4s^2 (\sigma''(s))^2+4s\sigma'(s)\sigma''(s)+8s(\sigma'(s))^3-4\sigma(s)(\sigma'(s))^2+(\sigma'(s))^2=0.
\ee
\end{corollary}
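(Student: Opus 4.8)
The plan is to substitute the double-scaling ansatz directly into the $\s_n$-equation (\ref{sode}) and read off the dominant balance. I would put $g_n(s):=\s_n(s/n^4)$, so that along $t=s/n^4$ one has the exact identities $\s_n(t)=g_n(s)$, $\s_n'(t)=n^{4}g_n'(s)$, $\s_n''(t)=n^{8}g_n''(s)$, and $t=s\,n^{-4}$. Inserting these into (\ref{sode}), which I write schematically as $\{\Lambda_n\}^{2}=16\{M_n\}\{N_n\}^{2}$ with $\Lambda_n,M_n,N_n$ the three bracketed expressions in that order, each of $\Lambda_n,M_n,N_n$ becomes a finite sum of monomials $n^{k}\,Q(s,g_n,g_n',g_n'',\al)$, a few of which carry a factor $(-1)^n$.

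The first real step is the (routine but lengthy) bookkeeping of powers of $n$, treating $g_n,g_n',g_n''$ as bounded. One checks that the top order in $\Lambda_n$ is $n^{4}$, arising only from $4t^{3}(\s_n'')^{2}$, from the $-\s_n'$ summand of $-4t^{2}[\cdots]\s_n''$, from $8t^{2}(\s_n')^{3}$, and from the $(4\s_n-1)$ summand of $-t[\cdots](\s_n')^{2}$; all other contributions to $\Lambda_n$ are $O(n)$. Hence
$$
\Lambda_n=n^{4}\,s\big[\,4s^{2}(g_n'')^{2}+4s\,g_n'g_n''+8s(g_n')^{3}+(1-4g_n)(g_n')^{2}\,\big]+O(n),
$$
so that $\{\Lambda_n\}^{2}=n^{8}s^{2}[\,\cdots\,]^{2}+O(n^{5})$. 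On the right-hand side one finds $M_n=g_n-2s\,g_n'+O(n^{-3})=O(1)$ and that $N_n$ has top order $n^{2}$, with $N_n=2(-1)^{n}n^{2}(2s\,g_n'-g_n)+O(n)$, so that $16M_nN_n^{2}=O(n^{4})$. Note that $(-1)^n$ enters only through $M_n$, $N_n$, and the subleading part of $\Lambda_n$, never through the $n^{4}$-coefficient above; thus no splitting into even and odd subsequences is needed.

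To finish, I would divide the identity $\{\Lambda_n\}^{2}=16\{M_n\}\{N_n\}^{2}$ by $n^{8}$: the right-hand side becomes $O(n^{-4})$ and the subleading part of the left-hand side $O(n^{-3})$, so letting $n\to\infty$ and using $g_n\to\sigma$, $g_n'\to\sigma'$, $g_n''\to\sigma''$ yields
$$
s^{2}\big[\,4s^{2}(\sigma'')^{2}+4s\,\sigma'\sigma''+8s(\sigma')^{3}+(1-4\sigma)(\sigma')^{2}\,\big]^{2}=0 .
$$
Since $s>0$, the inner bracket vanishes, and rewriting $(1-4\sigma)(\sigma')^{2}=(\sigma')^{2}-4\sigma(\sigma')^{2}$ recovers exactly (\ref{sf}).

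The main obstacle is not the algebra but the justification of the termwise passage to the limit, i.e. the $C^{2}$, locally uniform, convergence of $s\mapsto\s_n(s/n^4)$ to $\sigma(s)$ (so that $g_n',g_n''$ converge as well): under the formal hypothesis stated in the Corollary this is simply assumed, whereas a fully rigorous argument would need to feed in the asymptotics of $R_n$, hence of $\s_n=-\sum_{j=0}^{n-1}R_j$, obtained by the Coulomb-fluid / Riemann--Hilbert analysis referenced in Section~5. The power-counting that produces the two displayed expansions is entirely elementary.
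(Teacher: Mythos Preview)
Your proof is correct and follows essentially the same route as the paper: substitute $t=s/n^{4}$ into (\ref{sode}), track powers of $n$, and read off the $n^{4}$-coefficient of the left-hand bracket (whose square dominates the $O(n^{4})$ right-hand side), obtaining exactly the factor $s\big[4s^{2}(\sigma'')^{2}+4s\,\sigma'\sigma''+8s(\sigma')^{3}+(1-4\sigma)(\sigma')^{2}\big]$ displayed in the paper's proof. Your additional remarks---that $(-1)^{n}$ only enters subleading terms so no parity split is needed, and that the $C^{2}$ convergence of $g_{n}\to\sigma$ is being assumed rather than proved---are accurate refinements, but the underlying argument is the same power-counting computation.
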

\begin{proof}
With the change of variables $t\mapsto s/n^{4}, \s_n(t)\mapsto \s(s)$, equation (\ref{sode}) turns into
\bea
&&s\left[4s^2 (\sigma''(s))^2+4s\sigma'(s)\sigma''(s)+8s(\sigma'(s))^3-4\sigma(s)(\sigma'(s))^2+(\sigma'(s))^2\right]^2\nonumber\\
&+&\frac{8(-1)^n}{n^3}\Big\{8 s^4 (\sigma ''(s))^3-4 s^2 \left[6 s^2 (\sigma '(s))^2-12 s \sigma (s) \sigma '(s)-3 s \sigma '(s)+4 \sigma^2 (s)\right](\sigma ''(s))^2\nonumber\\
&-&2 s \sigma '(s) \left[4 s^2 (\sigma '(s))^2-20 s \sigma (s) \sigma '(s)-3 s \sigma '(s)+8 \sigma^2 (s)\right]\sigma ''(s)-48 s^3 (\sigma '(s))^5\nonumber\\
&+&2 s^2\left(60 \sigma (s)+1\right) (\sigma '(s))^4-s\left(80 \sigma^2 (s)-8 \sigma (s)-1\right) (\sigma '(s))^3\nonumber\\
&+&4 \sigma^2 (s)(4 \sigma (s)-1) (\sigma '(s))^2\Big\}+O\left(\frac{1}{n^4}\right)=0.\nonumber
\eea
Let $n\rightarrow\infty$, then we obtain (\ref{sf}) by retaining only the highest order terms.
\end{proof}

\begin{remark}
We also find the second-order differential equation (\ref{sf}) in the singular perturbed Gaussian weight problem under a suitable double scaling, and we call this equation the $\s$-form of Painlev\'{e} III\:$'$; see Theorem 3.3 in \cite{MLC}. This phenomenon is termed universality in random matrix theory \cite{Deift,Kui}. In addition, Forrester and Witte \cite{FW2002,FW2006} discovered that the generating function for the probability that an interval contains $k$ eigenvalues in the hard edge scaling limit of the Laguerre unitary ensemble can be evaluated in terms of another $\s$-form of Painlev\'{e} III\:$'$.
\end{remark}

\begin{theorem}
The quantity $\s_n(t)$ satisfies the following second-order nonlinear difference equation:
\bea\label{sd}
&&\left\{n-\frac{(\sigma _{n-1}-\sigma _n)(\sigma _n-\sigma _{n+1})\left[2 \alpha  (n-t)+2 (-1)^n t (n+\alpha)+n^2+\s_n\right]}{2 \left[(-1)^n t (2n+1+2\al)(2n-1+2\al)+(n+\alpha) (\sigma _{n-1}-\sigma _n) (\sigma _n-\sigma _{n+1})\right]}\right\}\nonumber\\[10pt]
&\times&\left\{n+2\al-\frac{(\sigma _{n-1}-\sigma _n)(\sigma _n-\sigma _{n+1})\left[2 \alpha  (n-t)+2 (-1)^n t (n+\alpha)+n^2+\s_n\right]}{2 \left[(-1)^n t (2n+1+2\al)(2n-1+2\al)+(n+\alpha) (\sigma _{n-1}-\sigma _n) (\sigma _n-\sigma _{n+1})\right]}\right\}\nonumber\\[10pt]
&=& t(2n-1+2\al+\sigma _{n-1}-\sigma _n)(2n+1+2\al+\sigma _{n}-\sigma _{n+1})\nonumber\\[10pt]
&\times&\frac{2 \alpha  (n-t)+2 (-1)^n t (n+\alpha)+n^2+\s_n}{ t (2n+1+2\al)(2n-1+2\al)+(-1)^n(n+\alpha) (\sigma _{n-1}-\sigma _n) (\sigma _n-\sigma _{n+1})},
\eea
with the initial conditions
$$
\s_1(t)=-R_0(t),\qquad \s_2(t)=-(R_0(t)+R_1(t))
$$
and $R_0(t)$ and $R_1(t)$ are evaluated in (\ref{ic1}) and (\ref{ic2}).
\end{theorem}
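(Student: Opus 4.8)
The plan is to eliminate the auxiliary quantities $r_n(t)$ and $\bt_n$ in favour of $\s_n(t)$. The key observation is that, by (\ref{sn}),
$$
R_n(t)=\s_n(t)-\s_{n+1}(t),\qquad R_{n-1}(t)=\s_{n-1}(t)-\s_n(t),
$$
so after these replacements (and $\sum_{j=0}^{n-1}R_j(t)=-\s_n(t)$) the relations (\ref{s2p1}), (\ref{s2p2}) and (\ref{be4}) become three algebraic equations in the two unknowns $r_n(t)$ and $\bt_n$; eliminating both leaves exactly one relation among $\s_{n-1},\s_n,\s_{n+1}$, which will be (\ref{sd}). Since (\ref{be4}) was assembled from (\ref{imp}) and (\ref{re}), hence ultimately from (\ref{s2p1}), (\ref{s2p2}) and (\ref{s2p3}), the triple $\{(\ref{s2p1}),(\ref{s2p2}),(\ref{be4})\}$ is an invertible recombination of the independent set $\{(\ref{s2p1}),(\ref{s2p2}),(\ref{s2p3})\}$, so the elimination really does use all three of those identities and cannot degenerate to a tautology.

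First I would use (\ref{s2p1}) to write $\bt_n=-2(-1)^nt\,r_n(t)/\big(R_n(t)R_{n-1}(t)\big)$ and substitute this into (\ref{be4}). Since (\ref{be4}) is linear in both $\bt_n$ and $r_n(t)$, this produces a single linear equation for $r_n(t)$, whose solution is
$$
n+r_n(t)=n-\frac{(\s_{n-1}-\s_n)(\s_n-\s_{n+1})\,[\,2\al(n-t)+2(-1)^nt(n+\al)+n^2+\s_n\,]}{2\,[\,(-1)^nt(2n+1+2\al)(2n-1+2\al)+(n+\al)(\s_{n-1}-\s_n)(\s_n-\s_{n+1})\,]},
$$
which is exactly the first factor in braces on the left of (\ref{sd}); the second factor is then $n+2\al+r_n(t)$. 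Feeding this $r_n(t)$ back into (\ref{s2p1}) (or (\ref{be4})) yields
$$
\bt_n=\frac{t\,[\,2\al(n-t)+2(-1)^nt(n+\al)+n^2+\s_n\,]}{t(2n+1+2\al)(2n-1+2\al)+(-1)^n(n+\al)(\s_{n-1}-\s_n)(\s_n-\s_{n+1})},
$$
which is precisely the coefficient block on the right of (\ref{sd}).

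The remaining step is to impose (\ref{s2p2}), written as
$$
(n+r_n(t))(n+2\al+r_n(t))=\bt_n\,(2n+1+2\al+R_n(t))(2n-1+2\al+R_{n-1}(t)).
$$
Substituting $R_n(t)=\s_n-\s_{n+1}$, $R_{n-1}(t)=\s_{n-1}-\s_n$ and the formulas just obtained for $n+r_n(t)$, $n+2\al+r_n(t)$ and $\bt_n$ turns this directly into (\ref{sd}). The initial conditions follow from (\ref{sn}) at $n=1$ and $n=2$: $\s_1(t)=-R_0(t)$ and $\s_2(t)=-(R_0(t)+R_1(t))$, with $R_0(t)$ and $R_1(t)$ given by (\ref{ic1}) and (\ref{ic2}).

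Conceptually the argument is short; the work is purely algebraic bookkeeping. The one point that needs care is the treatment of denominators: the steps above divide by $R_n(t)R_{n-1}(t)$ and by $t(2n+1+2\al)(2n-1+2\al)+(-1)^n(n+\al)(\s_{n-1}-\s_n)(\s_n-\s_{n+1})$, so to finish cleanly one should clear all denominators and verify (\ref{sd}) as a polynomial identity in $\s_{n-1},\s_n,\s_{n+1}$ and $t$. That verification is the main (and essentially only) obstacle.
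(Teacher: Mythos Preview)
Your argument is correct and essentially coincides with the paper's proof. Both proceed by writing $R_n=\s_n-\s_{n+1}$, combining (\ref{s2p1}) with (\ref{be4}) to solve linearly for $r_n(t)$ (your displayed formula is exactly the paper's (\ref{rnsn})), and then imposing one further relation to obtain (\ref{sd}); the only cosmetic difference is that you impose (\ref{s2p2}) directly, whereas the paper imposes (\ref{be3}) together with (\ref{s2p1}), which amounts to the same thing since (\ref{be3}) was obtained from (\ref{s2p1}) and (\ref{s2p2}).
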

\begin{proof}
From (\ref{sn}) we have
$$
R_n(t)=\s_n(t)-\s_{n+1}(t).
$$
Then (\ref{s2p1}) and (\ref{be3}) becomes
\be\label{eq3}
-2(-1)^n t\: r_{n}(t)=\beta_{n}(\s_n(t)-\s_{n+1}(t))(\s_{n-1}(t)-\s_{n}(t))
\ee
and
\be\label{eq4}
\bt_n=\frac{(n+r_{n}(t))^2+2\al(n+r_{n}(t))}{(2n-1+2\al)(2n+1+2\al+\s_n(t)-\s_{n+1}(t))}+\frac{2(-1)^n t\: r_{n}(t)}{(2n-1+2\al)(\s_n(t)-\s_{n+1}(t))},
\ee
respectively.
Substituting (\ref{be4}) into (\ref{eq3}), we get a linear equation for $r_n(t)$:
\bea
&&-2(-1)^n(2n+1+2\al)(2n-1+2\al) t\: r_{n}(t)\nonumber\\
&=&\left[n(n+2\al)+2(n+\al)r_n(t)+\s_n(t)+2t\left((-1)^n(n+\al)-\al\right)\right]\nonumber\\
&\times&(\s_n(t)-\s_{n+1}(t))(\s_{n-1}(t)-\s_{n}(t)).\nonumber
\eea
So $r_n(t)$ can be expressed in terms of $\s_n(t), \s_{n+1}(t)$ and $\s_{n-1}(t)$:
\begin{small}
\be\label{rnsn}
r_n(t)=-\frac{(\sigma _{n-1}(t)-\sigma _n(t))(\sigma _n(t)-\sigma _{n+1}(t))\left[2 \alpha  (n-t)+2 (-1)^n t (n+\alpha)+n^2+\s_n(t)\right]}{2 \left[(-1)^n t (2n+1+2\al)(2n-1+2\al)+(n+\alpha) (\sigma _{n-1}(t)-\sigma _n(t)) (\sigma _n(t)-\sigma _{n+1}(t))\right]}.
\ee
\end{small}
Substituting (\ref{eq4}) into (\ref{eq3}), and using (\ref{rnsn}) to eliminate $r_n(t)$, we obtain (\ref{sd}). The initial conditions come from formula (\ref{sn}).
\end{proof}

In the next theorem, we display the integral representation of $D_n(t)$ in terms of $R_n(t)$.
\begin{theorem}
\bea
\ln\frac{ D_n(t)}{D_{n}(0)}&=&\int_{0}^{t}\big[4 s^2 (2 n+1+2 \alpha) (R_n'(s))^2-4s R_n(s) (2 n +1+2 \alpha+R_n(s))R_n'(s)\nonumber\\
&-&( 2 n-1+2 \alpha) R_n^4(s)-2 \left(2 n^2+4 \alpha ( n-  s)-1\right)R_n^3(s)\nonumber\\
&-&(2 n+1+2 \alpha)  (4 s^2-8 \alpha  s-1)R_n^2(s)-8 s^2 (2 n+1+2 \alpha)^2 R_n(s)\nonumber\\
&-&4 s^2 (2 n+1+2 \alpha)^3\big]\frac{ds}{8s R_n^2(s) (2 n +1+2 \alpha+R_n(s))}.\nonumber
\eea
\end{theorem}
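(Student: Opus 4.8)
The plan is to start from the differential relation already established in the paper and integrate it. Recall that $\sigma_n(t) = 2t\frac{d}{dt}\ln D_n(t)$ by definition (\ref{def}), so that $\frac{d}{dt}\ln D_n(t) = \frac{\sigma_n(t)}{2t}$, and hence
\be\label{intrep}
\ln\frac{D_n(t)}{D_n(0)} = \int_0^t \frac{\sigma_n(s)}{2s}\,ds.
\ee
Therefore it suffices to show that $\sigma_n(t)$ equals the bracketed expression in the theorem divided by $4R_n^2(t)(2n+1+2\alpha+R_n(t))$; then (\ref{intrep}) immediately gives the claimed integral, since the stated integrand is exactly that quantity divided by a further factor $2s$. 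So the real content is an \emph{algebraic} identity expressing $\sigma_n(t)$ in terms of $R_n(t)$ and $R_n'(t)$ alone.

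To obtain that identity, I would combine the relations already derived in Sections 3 and 4. From (\ref{rexp}) we have $r_n(t)$ expressed rationally in $R_n(t)$ and $R_n'(t)$. Substituting this into (\ref{re}), together with the expression (\ref{be3}) for $\beta_n$ in terms of $R_n$ and $r_n$, yields $\sigma_n(t)$ purely in terms of $R_n(t)$ and $R_n'(t)$: indeed (\ref{re}) reads $(2n+1+2\alpha)\beta_n R_{n-1}(t) + (2n-1+2\alpha)\beta_n R_n(t) = 2\alpha t - 2(-1)^n t(n+\alpha-r_n(t)) + r_n^2(t) - \sigma_n(t)$, and the left-hand side can be handled using $\beta_n R_{n-1}(t) = -2(-1)^n t\, r_n(t)/R_n(t)$ (from (\ref{s1}) and (\ref{s2p1})) plus (\ref{be3}); everything on the resulting left side is then rational in $R_n, r_n$, and so after inserting (\ref{rexp}) one solves linearly for $\sigma_n(t)$. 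The outcome should be precisely
$$
\sigma_n(t) = \frac{\text{(bracket)}}{4 R_n^2(t)(2n+1+2\alpha+R_n(t))},
$$
with the bracket matching the degree-and-parity structure of the one displayed in the theorem (quartic in $R_n$, quadratic in $R_n'$, with the $t^2(R_n')^2$ and $tR_nR_n'$ cross-terms coming from the $R_n'$-dependence of $r_n$).

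Finally I would insert this formula for $\sigma_n(t)$ into (\ref{intrep}), absorbing the extra $1/(2s)$ into the denominator to produce the factor $8sR_n^2(s)(2n+1+2\alpha+R_n(s))$ shown in the statement, and check the lower limit: as $t\to 0^+$ the weight reduces to the classical symmetric Jacobi weight, $D_n(0)$ is given by (\ref{bg}), and one must confirm the integral converges at $s=0$ — this follows because $R_n(s)\to 0$ like a positive power of $s$ as $s\to 0$ (seen from (\ref{Rnt}) or the Riccati equation (\ref{ricca2}), which forces $R_n \sim c\sqrt{s}$ or similar near the origin), so the apparent singularity from $1/(sR_n^2)$ is cancelled by the vanishing of the numerator. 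The main obstacle is the bookkeeping in the algebraic elimination: it is a genuine but routine computation to verify that after substituting (\ref{rexp}), (\ref{be3}) and $\beta_n R_{n-1} = -2(-1)^n t\, r_n/R_n$ into (\ref{re}) and solving for $\sigma_n$, the numerator collapses to exactly the stated quintic-looking polynomial in $R_n$ with the indicated $R_n'$ terms; one should double-check the coefficient of each monomial (in particular the $(-1)^n$-dependent terms and the $8\alpha s$ pieces) against a low-$n$ sanity check, e.g. $n=1$, using the explicit initial data (\ref{ic2}). No new ideas beyond the Riccati system of Section 3 and the $\sigma$-relations of Section 4 are needed; it is purely a matter of carrying out the elimination cleanly.
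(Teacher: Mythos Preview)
Your proposal is correct and follows essentially the same route as the paper: the paper equates the two expressions (\ref{be3}) and (\ref{be4}) for $\beta_n$ to obtain $\sigma_n$ as a rational function of $R_n$ and $r_n$ (formula (\ref{sr})), then substitutes (\ref{rexp}) for $r_n$ and integrates via (\ref{def}); your use of (\ref{re}) together with (\ref{s2p1}) and (\ref{be3}) to isolate $\sigma_n$ amounts to the same elimination (since (\ref{be4}) itself was derived from (\ref{imp}) and (\ref{re})). Your extra remark on convergence of the integral at $s=0$ is a point the paper does not address explicitly.
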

\begin{proof}
The combination of (\ref{be3}) and (\ref{be4}) shows that $\s_n(t)$ can be expressed in terms of $R_n(t)$ and $r_n(t)$:
\bea\label{sr}
\s_n(t)&=&-n^2-2\al(n-t)-2(n+\al)((-1)^n t+r_n(t))\nonumber\\
&+&(2n+1+2\al)\left[\frac{2 (-1)^n t\: r_n(t)}{R_n(t)}+\frac{(n+r_n(t)) (n+2 \alpha +r_n(t))}{2 n+1+2 \alpha +R_n(t)}\right].
\eea
Substituting (\ref{rexp}) into (\ref{sr}), we have
\bea
\s_n(t)&=&\frac{1}{4 R_n^2(t) (2 n +1+2 \alpha+R_n(t))}\big[4 t^2 (2 n+1+2 \alpha) (R_n'(t))^2\nonumber\\
&-&4 t R_n(t) (2 n +1+2 \alpha+R_n(t))R_n'(t)-( 2 n-1+2 \alpha) R_n^4(t)\nonumber\\
&-&2 \left(2 n^2+4 \alpha ( n-  t)-1\right)R_n^3(t)-(2 n+1+2 \alpha)  (4 t^2-8 \alpha  t-1)R_n^2(t)\nonumber\\
&-&8 t^2 (2 n+1+2 \alpha)^2 R_n(t)-4 t^2 (2 n+1+2 \alpha)^3\big].\nonumber
\eea
In view of the definition of $\s_n(t)$ in (\ref{def}), we have the desired result. Note that $D_n(0)$ is explicitly given by (\ref{bg}).
\end{proof}
\begin{remark}\label{rep}
In random matrix theory, the integral representation of $D_n(t)$ in terms of $R_n(t)$ is very important in the asymptotic analysis of the Hankel determinant. This is because $R_n(t)$ is usually related to a particular Painlev\'{e} transcendent. Then the asymptotics of $D_n(t)$ can be found via the Painlev\'{e} equations.
\end{remark}

\section{Double Scaling Analysis of the Hankel Determinant}
In this section, we consider the asymptotics of the Hankel determinant under a suitable double scaling. As is mentioned in Remark \ref{rep}, the asymptotics of $D_n(t)$ can be obtained from the integral representation involving a Painlev\'{e} transcendent. However, we do not intend to adopt this method, since we can establish the relation of our problem with a previous study by Chen and Dai.

In the paper of Chen and Dai \cite{ChenDai}, they studied the Hankel determinant generated by a Pollaczek-Jacobi type weight, i.e.,
$$
\tilde{D}_{n}(t,a,b):=\det\left(\int_{0}^{1}x^{i+j}x^{a}(1-x)^{b}\mathrm{e}^{-\frac{t}{x}}dx\right)_{i,j=0}^{n-1}.
$$
Note that we use the parameters $a$ and $b$ instead of $\al$ and $\bt$ in their paper to avoid confusion with our previous parameters.
The orthogonality with the Pollaczek-Jacobi type weight is
$$
\int_{0}^{1}\tilde{P}_{j}(x,a,b)\tilde{P}_{k}(x,a,b)x^{a}(1-x)^{b}\mathrm{e}^{-\frac{t}{x}}dx=\tilde{h}_{j}(t,a,b)\delta_{jk},
$$
where $\tilde{P}_{j}(x,a,b),\; j=0,1,2,\ldots$, are the monic polynomials of degree $j$ orthogonal with respect to the weight $x^{a}(1-x)^{b}\mathrm{e}^{-\frac{t}{x}}$. Obviously, $\tilde{P}_{j}(x,a,b)$ should also depend on $t$, while we do not write it down for simplicity.

Based on the ladder operator approach, Chen and Dai \cite{ChenDai} obtained the following result.
\begin{lemma}\label{cd}
The logarithmic derivative of the Hankel determinant with respect to $t$,
\be\label{def1}
H_{n}(t,a,b):=t \frac{d}{dt}\ln \tilde{D}_{n}(t,a,b),
\ee
satisfies the following nonlinear second-order differential equation,
\be\label{hn}
(t H_{n}'')^2=\left[n(n+a+b)-H_n+(a+t)H_n'\right]^2+4H_n'(tH_n'-H_n)(b-H_n').
\ee
Let
$$
\tilde{H}_{n}(t,a,b):=H_{n}(t,a,b)-n(n+a+b),
$$
then (\ref{hn}) becomes
\bea
(t\tilde{H}_{n}'')^2&=&-4t(\tilde{H}_{n}')^3+(\tilde{H}_{n}')^2\left[4\tilde{H}_{n}+(a+2b+t)^2+4n(n+a+b)-4b(a+b)\right]\nonumber\\
&-&2\tilde{H}_{n}'\left[(a+2b+t)\tilde{H}_{n}+2nb(n+a+b)\right]+\tilde{H}_{n}^2,\nonumber
\eea
which is the Jimbo-Miwa-Okamoto $\s$-form of Painlev\'{e} V, with parameters $\nu_{0}=0,\; \nu_{1}=-(n+a+b),\; \nu_{2}=n,\; \nu_{3}=-b$, following the convention in \cite[(C.45)]{Jimbo1981}.
\end{lemma}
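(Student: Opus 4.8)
Since this is a statement quoted from Chen and Dai \cite{ChenDai}, the plan is to indicate how one reproduces it by the ladder operator machinery already set up in Sections 2--4, applied to the Pollaczek--Jacobi type weight $x^a(1-x)^b\mathrm{e}^{-t/x}$ on $[0,1]$. First I would write down $\mathrm{v}(x)=-\ln\bigl(x^a(1-x)^b\mathrm{e}^{-t/x}\bigr)$ and compute $\mathrm{v}'(x)$ and the divided difference $\frac{\mathrm{v}'(x)-\mathrm{v}'(y)}{x-y}$; inserting these into the definitions (\ref{an})--(\ref{bn}) of $A_n(z)$ and $B_n(z)$ produces rational functions of $z$ with poles at $z=0$ and $z=1$, whose residues define a small number of auxiliary quantities (the analogues of $R_n(t)$ and $r_n(t)$ here, together with $\mathrm{p}(n,t)$ and $\beta_n$). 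Feeding these expressions into $(S_1)$, $(S_2)$ and $(S_2')$ gives, exactly as in Section 2, a closed system of nonlinear difference relations among the auxiliary quantities and $\beta_n$.

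Next I would differentiate the orthogonality relations $\int_0^1 \tilde P_n^2 x^a(1-x)^b\mathrm{e}^{-t/x}dx=\tilde h_n$ and $\int_0^1 \tilde P_n\tilde P_{n-2}\,x^a(1-x)^b\mathrm{e}^{-t/x}dx=0$ with respect to $t$, mirroring (\ref{eq1})--(\ref{pnt}). This expresses $t\frac{d}{dt}\ln \tilde h_n$, $t\beta_n'$ and $t\frac{d}{dt}\mathrm{p}(n,t)$ in terms of the residue quantities, and since $H_n=t\frac{d}{dt}\ln\tilde D_n=\sum_{j=0}^{n-1}t\frac{d}{dt}\ln\tilde h_j$, it ties $H_n$ (and $H_n'$) directly to these quantities and to $\sum_j(\text{residue at }0)$. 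Combining the differential identities with the algebraic $(S_2')$-identities then lets one solve for all the auxiliary quantities rationally in terms of $H_n,H_n',t$, and substitute back into one remaining relation to eliminate everything except $H_n$. The result is the second-order ODE (\ref{hn}); the transformation $\tilde H_n=H_n-n(n+a+b)$ is a linear shift that removes the lowest-order inhomogeneity and puts the equation into the standard Jimbo--Miwa--Okamoto $\sigma$-form, after which one reads off $\nu_0,\nu_1,\nu_2,\nu_3$ by matching coefficients against \cite[(C.45)]{Jimbo1981}.

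The main obstacle is purely computational bookkeeping: in the half-line, two-singularity setting the residue system is genuinely larger than in the present paper's symmetric even-weight case (there is no parity to kill half the integrals, so $B_n(z)$ carries a genuine $1/z$ term with nontrivial residue and an $x=1$ contribution), so the elimination that produces a \emph{single} equation in $H_n$ alone requires carefully tracking several coupled quantities and using the summed identity $\sum_j A_j$ from $(S_2')$ to collapse the telescoping sums. One must also be careful that the quadratic steps (solving $(S_2')$-type relations for an auxiliary variable and squaring) do not introduce the spurious branch; as in the proof of our Theorem 3.3, the correct branch is pinned down by checking a special value of $n$. I would not reproduce the calculation here but simply cite \cite{ChenDai}, noting that the structure of the derivation is identical to that carried out in Sections 2--4 above; the final identification with $P_V$ rather than $P_{III}$ reflects that the weight has two finite singular points ($x=0$ and $x=1$) rather than one.
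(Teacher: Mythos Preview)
Your proposal is appropriate: the paper itself gives no proof of this lemma at all, but simply quotes it from Chen and Dai \cite{ChenDai} with the remark ``Based on the ladder operator approach, Chen and Dai obtained the following result.'' Your sketch correctly outlines the ladder-operator derivation that Chen and Dai carry out for the Pollaczek--Jacobi weight, so in effect you are supplying the argument that the paper chooses to suppress by citation; the methodology you describe (computing $A_n$, $B_n$ from the weight, extracting the compatibility conditions, differentiating the orthogonality relations, and eliminating the auxiliary quantities) matches the structure of \cite{ChenDai} and of Sections~2--4 of the present paper.
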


\begin{proposition}
We have
\be\label{de1}
\tilde{P}_j\left(x,-\frac{1}{2},\al\right)=P_{2j}(\sqrt{x}),\qquad j=0,1,2,\ldots,
\ee
\be\label{de2}
\tilde{P}_j\left(x,\frac{1}{2},\al\right)=\frac{P_{2j+1}(\sqrt{x})}{\sqrt{x}},\qquad j=0,1,2,\ldots,
\ee
where $\left\{\tilde{P}_j\left(x,-\frac{1}{2},\al\right)\right\}_{j=0}^{\infty}$ are the monic orthogonal polynomials with respect to the weight function $x^{-\frac{1}{2}}(1-x)^{\al}\mathrm{e}^{-\frac{t}{x}}$, and $\left\{\tilde{P}_j\left(x,\frac{1}{2},\al\right)\right\}_{j=0}^{\infty}$ are the monic orthogonal polynomials with respect to the weight $x^{\frac{1}{2}}(1-x)^{\al}\mathrm{e}^{-\frac{t}{x}}$. Here the subscript $j$ represents the degree of the corresponding polynomial. Furthermore, the following relations hold:
\be\label{rela1}
h_{2j}(t)=\tilde{h}_{j}\left(t,-\frac{1}{2},\alpha\right),
\ee
\be\label{rela2}
h_{2j+1}(t)=\tilde{h}_{j}\left(t,\frac{1}{2},\alpha\right).
\ee

\end{proposition}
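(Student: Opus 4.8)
The plan is to use the classical quadratic transformation between orthogonal polynomials for an even weight on $[-1,1]$ and orthogonal polynomials on $[0,1]$ (see, e.g., Chihara \cite{Chihara}), realized here by the change of variable $x=y^{2}$. Since $w(y,t)=(1-y^{2})^{\al}\mathrm{e}^{-t/y^{2}}$ is even, the parity property recorded after (\ref{expan}) gives that $P_{2j}(y,t)$ is an even polynomial of degree $2j$ and $P_{2j+1}(y,t)$ an odd polynomial of degree $2j+1$. Consequently $P_{2j}(\sqrt{x},t)$ is a polynomial in $x$ of exact degree $j$, and since $P_{2j+1}(y,t)=y^{2j+1}+\mathrm{p}(2j+1,t)y^{2j-1}+\cdots=y\bigl(y^{2j}+\mathrm{p}(2j+1,t)y^{2j-2}+\cdots\bigr)$, the function $P_{2j+1}(\sqrt{x},t)/\sqrt{x}$ is also a polynomial in $x$ of exact degree $j$; in both cases the leading coefficient equals $1$ because $P_{2j}$ and $P_{2j+1}$ are monic. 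By uniqueness of monic orthogonal polynomials, (\ref{de1}) and (\ref{de2}) will follow once the respective orthogonality relations are checked.

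For (\ref{de1}), I would fix $0\le k<j$ and perform the substitution $x=y^{2}$, $dx=2y\,dy$, using that the resulting integrand is even in $y$:
\bea
\int_{0}^{1}P_{2j}(\sqrt{x},t)\,x^{k}\,x^{-1/2}(1-x)^{\al}\mathrm{e}^{-t/x}\,dx
&=&2\int_{0}^{1}P_{2j}(y,t)\,y^{2k}(1-y^{2})^{\al}\mathrm{e}^{-t/y^{2}}\,dy\nonumber\\
&=&\int_{-1}^{1}P_{2j}(y,t)\,y^{2k}\,w(y,t)\,dy=0,\nonumber
\eea
the last equality by orthogonality since $\deg(y^{2k})=2k<2j$. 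Hence $P_{2j}(\sqrt{x},t)=\tilde{P}_{j}(x,-\tfrac12,\al)$. Applying the same substitution to $\int_{0}^{1}\tilde{P}_{j}^{2}(x,-\tfrac12,\al)\,x^{-1/2}(1-x)^{\al}\mathrm{e}^{-t/x}\,dx$ collapses it to $\int_{-1}^{1}P_{2j}^{2}(y,t)\,w(y,t)\,dy=h_{2j}(t)$, which is (\ref{rela1}).

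For (\ref{de2}) the argument is identical, except that the weight $x^{1/2}(1-x)^{\al}\mathrm{e}^{-t/x}$ together with the factor $1/\sqrt{x}$ contributes, after $x=y^{2}$, a net factor $y^{2k+1}$: for $0\le k<j$,
$$
\int_{0}^{1}\frac{P_{2j+1}(\sqrt{x},t)}{\sqrt{x}}\,x^{k}\,x^{1/2}(1-x)^{\al}\mathrm{e}^{-t/x}\,dx
=\int_{-1}^{1}P_{2j+1}(y,t)\,y^{2k+1}\,w(y,t)\,dy=0,
$$
again because the integrand $P_{2j+1}(y,t)\,y^{2k+1}w(y,t)$ is even and $\deg(y^{2k+1})=2k+1<2j+1$. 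The norm relation (\ref{rela2}) then follows by the same collapse of $\int_{0}^{1}\tilde{P}_{j}^{2}(x,\tfrac12,\al)\,x^{1/2}(1-x)^{\al}\mathrm{e}^{-t/x}\,dx$ to $\int_{-1}^{1}P_{2j+1}^{2}(y,t)\,w(y,t)\,dy=h_{2j+1}(t)$.

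I do not expect a genuine obstacle: the proof is entirely the substitution $x=y^{2}$ combined with the parity of $P_{n}(\cdot,t)$. The only points needing care are (i) verifying that $P_{2j+1}(\sqrt{x},t)/\sqrt{x}$ is honestly a polynomial, which is precisely why the odd case carries the shifted parameter $a=\tfrac12$ rather than $a=-\tfrac12$, and (ii) tracking the Jacobian $2y\,dy$ together with the halving of the interval of integration, so that the normalization constants in (\ref{rela1})--(\ref{rela2}) come out exactly, with no spurious factor of $2$.
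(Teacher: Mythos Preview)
Your proof is correct and follows essentially the same route as the paper: both use the quadratic change of variable $x=y^{2}$ together with the parity of $P_{n}(\cdot,t)$ and uniqueness of monic orthogonal polynomials. The only cosmetic difference is that you verify orthogonality against monomials $x^{k}$, whereas the paper pairs $P_{2j}$ with $P_{2k}$ (respectively $P_{2j+1}$ with $P_{2k+1}$) directly; your version is arguably tidier in that it makes the monicity and polynomiality of the right-hand sides explicit.
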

\begin{proof}
At first, it is easy to see that the right hand side of (\ref{de1}) and (\ref{de2}) are indeed polynomials of degree of $j$.
From the orthogonality (\ref{or}), we have
\bea
h_{2j}(t)\delta_{2j,2k}&=&\int_{-1}^{1}P_{2j}(x)P_{2k}(x)(1-x^2)^\al\mathrm{e}^{-\frac{t}{x^{2}}}dx\nonumber\\
&=&2\int_{0}^{1}P_{2j}(x)P_{2k}(x)(1-x^2)^\al\mathrm{e}^{-\frac{t}{x^{2}}}dx\nonumber\\
&=&\int_{0}^{1}P_{2j}(\sqrt{y})P_{2k}(\sqrt{y})y^{-\frac{1}{2}}(1-y)^{\al}\mathrm{e}^{-\frac{t}{y}}dy,\qquad j,k=0,1,2,\ldots,\nonumber
\eea
which implies
$$
\tilde{P}_j\left(y,-\frac{1}{2},\al\right)=P_{2j}(\sqrt{y}),\qquad j=0,1,2,\ldots.
$$
Then
\bea
h_{2j}(t)
&=&\int_{0}^{1}\left[\tilde{P}_j\left(y,-\frac{1}{2},\al\right)\right]^{2}y^{-\frac{1}{2}}(1-y)^{\al}\mathrm{e}^{-\frac{t}{y}}dy\nonumber\\
&=&\tilde{h}_{j}\left(t,-\frac{1}{2},\alpha\right),\qquad j=0,1,2,\ldots.\nonumber
\eea
On the other hand,
\bea
h_{2j+1}(t)\delta_{2j+1,2k+1}&=&\int_{-1}^{1}P_{2j+1}(x)P_{2k+1}(x)(1-x^2)^\al\mathrm{e}^{-\frac{t}{x^{2}}}dx\nonumber\\
&=&2\int_{0}^{1}P_{2j+1}(x)P_{2k+1}(x)(1-x^2)^\al\mathrm{e}^{-\frac{t}{x^{2}}}dx\nonumber\\
&=&\int_{0}^{1}\frac{P_{2j+1}(\sqrt{y})}{\sqrt{y}}\frac{P_{2k+1}(\sqrt{y})}{\sqrt{y}}y^{\frac{1}{2}}(1-y)^{\al}\mathrm{e}^{-\frac{t}{y}}dy.\nonumber
\eea
It follows that
$$
\tilde{P}_j\left(y,\frac{1}{2},\al\right)=\frac{P_{2j+1}(\sqrt{y})}{\sqrt{y}},\qquad j=0,1,2,\ldots.
$$
Hence,
\bea
h_{2j+1}(t)
&=&\int_{0}^{1}\left[\tilde{P}_j\left(y,\frac{1}{2},\al\right)\right]^{2}y^{\frac{1}{2}}(1-y)^{\al}\mathrm{e}^{-\frac{t}{y}}dy\nonumber\\
&=&\tilde{h}_{j}\left(t,\frac{1}{2},\alpha\right),\qquad j=0,1,2,\ldots.\nonumber
\eea
The proof is complete.
\end{proof}
From (\ref{rela1}) and (\ref{rela2}), and with the aid of (\ref{hankel}), we establish the relation between $D_{n}(t)$ and $\tilde{D}_{n}(t,a,b)$, with $n=0,1,2,\ldots$,
\be\label{hd1}
D_{2n}(t)=\tilde{D}_{n}\left(t,\frac{1}{2},\al\right)\tilde{D}_{n}\left(t,-\frac{1}{2},\al\right),
\ee
\be\label{hd2}
D_{2n+1}(t)=\tilde{D}_{n}\left(t,\frac{1}{2},\al\right)\tilde{D}_{n+1}\left(t,-\frac{1}{2},\al\right).
\ee
In addition, in view of (\ref{def}) and (\ref{def1}), we find for $n=0,1,2,\ldots,$
\be\label{re3}
\sigma_{2n}(t)=2\left[H_{n}\left(t,\frac{1}{2},\alpha\right)+H_{n}\left(t,-\frac{1}{2},\alpha\right)\right],
\ee
\be\label{re4}
\sigma_{2n+1}(t)=2\left[H_{n}\left(t,\frac{1}{2},\alpha\right)+H_{n+1}\left(t,-\frac{1}{2},\alpha\right)\right].
\ee
Hence, we can express $\s_n(t)$ in terms of the $\s$-function of a Painlev\'{e} V:
\be\label{rs1}
\sigma_{2n}(t)=2\left[\tilde{H}_{n}\left(t,\frac{1}{2},\alpha\right)+\tilde{H}_{n}\left(t,-\frac{1}{2},\alpha\right)+2n(n+\al)\right],
\ee
\be\label{rs2}
\sigma_{2n+1}(t)=2\left[\tilde{H}_{n}\left(t,\frac{1}{2},\alpha\right)+\tilde{H}_{n+1}\left(t,-\frac{1}{2},\alpha\right)+(2n+1)\left(n+\al+\frac{1}{2}\right)\right],
\ee
where the function $\tilde{H}_{n}(t,a,b)$, as shown in Lemma \ref{cd}, satisfies the Jimbo-Miwa-Okamoto $\s$-form of a particular Painlev\'{e} V.

Similarly, according to the definition of $R_{n}(t)$ in (\ref{Rnt}), we obtain
\be\label{dou1}
R_{2n}(t)=2R_{n}^*\left(t,-\frac{1}{2},\al\right)=2\left(\tilde{R}_n\left(t,-\frac{1}{2},\al\right)-2n-\frac{1}{2}-\al\right),
\ee
\be\label{dou2}
R_{2n+1}(t)=2R_{n}^*\left(t,\frac{1}{2},\al\right)=2\left(\tilde{R}_n\left(t,\frac{1}{2},\al\right)-2n-\frac{3}{2}-\al\right),
\ee
for $n=0,1,2,\ldots$, where $R_{n}^*(t,a,b)$ and $\tilde{R}_{n}(t,a,b)$ are defined by (2.16) and (2.17) in \cite{ChenDai},
$$
R_{n}^*(t,a,b):=\frac{t}{\tilde{h}_{n}(t,a,b)}\int_{0}^{1}\frac{\tilde{P}_{n}^2(y,a,b)}{y}y^{a}(1-y)^b\mathrm{e}^{-\frac{t}{y}}dy,
$$
$$
\tilde{R}_{n}(t,a,b):=\frac{b}{\tilde{h}_{n}(t,a,b)}\int_{0}^{1}\frac{\tilde{P}_{n}^2(y,a,b)}{1-y}y^{a}(1-y)^b\mathrm{e}^{-\frac{t}{y}}dy,
$$
and they have a simple relation given by formula (2.25) in \cite{ChenDai},
$$
R_{n}^*(t,a,b)=\tilde{R}_{n}(t,a,b)-2n-1-a-b.
$$

By employing the general linear statistics results from \cite{Chen1998}, based on Dyson's Coulomb fluid interpretation \cite{Dyson}, Chen et al. \cite{CCF} studied the asymptotic behavior of $\tilde{R}_n(t,a,b)$ and $\tilde{D}_{n}(t,a,b)$ under a suitable double scaling. According to their results, we obtain the asymptotics of our scaled $R_n(t)$ and $D_n(t)$.

\begin{theorem}
Assume that $n\rightarrow\infty$ and $t\rightarrow 0$ such that $s=2n^2 t$ is fixed and the following limits
$$
g_1(s):=\lim_{n\rightarrow\infty}n\: R_{2n}\left(\frac{s}{2n^2}\right)
$$
$$
g_2(s):=\lim_{n\rightarrow\infty}n\:R_{2n+1}\left(\frac{s}{2n^2}\right)
$$
exist. Then for $s\rightarrow 0$,
$$
g_1(s)=-4 s+\frac{32 s^2}{3}-\frac{256 s^3}{15}+\frac{8192 s^4}{315}-\frac{311296 s^5}{8505}+\frac{7733248 s^6}{155925}+O(s^7),
$$
$$
g_2(s)=4 s+\frac{32 s^2}{3}+\frac{256 s^3}{15}+\frac{8192 s^4}{315}+\frac{311296 s^5}{8505}+\frac{7733248 s^6}{155925}+O(s^7),
$$
and for $s\rightarrow \infty$,
$$
g_1(s)=2 s^{\frac{2}{3}}+\frac{1}{3}s^{\frac{1}{3}}+\frac{1}{108s^{\frac{1}{3}}}-\frac{1}{648s^{\frac{2}{3}}}+\frac{1}{324 s}-\frac{7}{5832 s^{\frac{4}{3}}}+O(s^{-\frac{5}{3}}),
$$
$$
g_2(s)=2 s^{\frac{2}{3}}-\frac{1}{3}s^{\frac{1}{3}}-\frac{1}{108s^{\frac{1}{3}}}-\frac{1}{648s^{\frac{2}{3}}}-\frac{1}{324 s}-\frac{7}{5832 s^{\frac{4}{3}}}+O(s^{-\frac{5}{3}}).
$$
\end{theorem}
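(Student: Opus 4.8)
The plan is to bypass the Painlev\'e integral representation of $D_n(t)$ and instead transport the known double-scaling asymptotics of the Pollaczek--Jacobi type quantities of Chen and Dai \cite{ChenDai} and Chen et al. \cite{CCF} through the exact identities \eqref{dou1}--\eqref{dou2}. First I would use $R_{2n}(t)=2R_{n}^{*}\!\big(t,-\tfrac12,\al\big)$, $R_{2n+1}(t)=2R_{n}^{*}\!\big(t,\tfrac12,\al\big)$ and $R_{n}^{*}(t,a,b)=\tilde{R}_{n}(t,a,b)-2n-1-a-b$ to rewrite, with $t=s/(2n^{2})$,
$$
n\,R_{2n}\!\Big(\tfrac{s}{2n^{2}}\Big)=2n\Big[\tilde{R}_{n}\!\Big(\tfrac{s}{2n^{2}},-\tfrac12,\al\Big)-2n-\tfrac12-\al\Big],
$$
$$
n\,R_{2n+1}\!\Big(\tfrac{s}{2n^{2}}\Big)=2n\Big[\tilde{R}_{n}\!\Big(\tfrac{s}{2n^{2}},\tfrac12,\al\Big)-2n-\tfrac32-\al\Big],
$$
so that $g_1(s)$ and $g_2(s)$ are, up to the factor $2$, the $n\to\infty$ limits of $n\big(\tilde{R}_{n}(t,a,\al)-2n-1-a-\al\big)$ at $a=-\tfrac12$ and $a=+\tfrac12$ respectively. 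It then suffices to insert the asymptotic expansion of $\tilde{R}_{n}(t,a,b)$ valid in the regime $n\to\infty$, $t\to0$ with $n^{2}t$ fixed (equivalently our $s=2n^{2}t$ fixed), which was obtained in \cite{CCF} by the Coulomb fluid method built on \cite{Chen1998} and \cite{Dyson}. The essential singularity $\mathrm{e}^{-t/x}$ is located at the hard edge $x=0$, where the factor $(1-x)^{b}$ is analytic and hence contributes only to orders beyond those surviving in the limit; this is the structural reason why $g_1,g_2$ come out independent of $\al$, while the power $x^{a}$ at the singularity is felt and produces the $a=\mp\tfrac12$ dichotomy responsible for the opposite signs of the odd-order coefficients in $g_1$ versus $g_2$.

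Next I would substitute these expansions into the two displays above, set $a=\mp\tfrac12$, $b=\al$, and retain only the terms that survive as $n\to\infty$, allowing for an innocuous rescaling $s\mapsto cs$ should \cite{CCF} normalise the scaled variable differently from our $s=2n^{2}t$. For $s\to 0$ one uses the convergent (analytic-in-$s$) form of the expansion; after multiplying by $2$ this yields the Taylor series for $g_1$ and $g_2$ through $O(s^{7})$, with the indicated sign pattern coming from the $\pm\tfrac12$ in $a$. For $s\to\infty$ one uses instead the $s\to\infty$ Coulomb fluid expansion, which is organised in the fractional powers $s^{2/3},s^{1/3},s^{-1/3},s^{-2/3},\dots$; substituting $a=\mp\tfrac12$, $b=\al$, discarding the $\al$- and $n$-dependent lower-order remainders, and multiplying by $2$ gives the stated expansions through $O(s^{-5/3})$. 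In both regimes the passage to the limit is term by term, and is justified by the uniformity of the \cite{CCF} estimates on compact subsets of $(0,\infty)$ in the small-$s$ regime and on half-lines $[s_0,\infty)$ in the large-$s$ regime.

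The step I expect to be the genuine obstacle is the bookkeeping that links the two sets of expansions: (i) pinning down the precise relation between our scaled variable $s=2n^{2}t$ and the one used in \cite{CCF} for the degree-$n$ Pollaczek--Jacobi ensemble, keeping in mind that the index of $\tilde{R}_{n}$ equals $n$ while the $R$ on our side carries the index $2n$ or $2n+1$; (ii) verifying, order by order, that every $b=\al$-dependent coefficient and every term growing or decaying polynomially in $n$ actually cancels inside $n\big(\tilde{R}_{n}-2n-1-a-\al\big)$ as $n\to\infty$, so that $g_1,g_2$ are genuinely $\al$-free; and (iii) carrying the Coulomb fluid computation far enough to reach $O(s^{7})$ and $O(s^{-5/3})$, a finite but lengthy manipulation. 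As an independent consistency check one may verify that the resulting series satisfy the reduced equations obtained by specialising the Riccati pair \eqref{ricca1}--\eqref{ricca2} (equivalently \eqref{ode}, or the difference equation \eqref{Rn}) to $t=s/(2n^{2})$ and letting $n\to\infty$, using \eqref{s1} to track the accompanying limit of $r_n$.
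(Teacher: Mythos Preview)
Your approach is essentially identical to the paper's: reduce via \eqref{dou1}--\eqref{dou2} to the Pollaczek--Jacobi quantities and then import the asymptotics from \cite{CCF}. The paper resolves your three ``bookkeeping'' worries in one stroke by recognising the precise limit
\[
g(s,a,b):=\lim_{n\to\infty}n^{2}\Big(\tfrac{\tilde{R}_{n}(s/(2n^{2}),a,b)}{2n+1+a+b}-1\Big)
\]
from \cite{CCF}, which gives $g_{1}(s)=4g(s,-\tfrac12,\al)$ and $g_{2}(s)=4g(s,\tfrac12,\al)$ directly (same scaling variable, no spurious $n$-terms), and since $g$ satisfies a Painlev\'e~III$'$ equation that does not involve $b$, the $\al$-independence is automatic; the coefficients then follow by substituting $a=\mp\tfrac12$ into the known series for $g$.
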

\begin{proof}
According to (\ref{dou1}) and (\ref{dou2}), we have
\be\label{re11}
g_1(s)=4g\left(s,-\frac{1}{2},\al\right),
\ee
\be\label{re12}
g_2(s)=4g\left(s,\frac{1}{2},\al\right),
\ee
where $g(s,a,b)$ is defined by \cite[p.6]{CCF}
$$
g(s,a,b)=\lim_{n\rightarrow\infty}n^2\left(\frac{\tilde{R}_{n}\left(\frac{s}{2n^2},a,b\right)}{2n+1+a+b}-1\right).
$$
and $g(s,a,b)$ satisfies a particular Painlev\'{e} III$'$ \cite[(A.45.4$'$)]{Mehta}:
$$
g''=\frac{(g')^2}{g}-\frac{g'}{s}+\frac{2g^2}{s^2}+\frac{a}{2s}-\frac{1}{4g},
$$
i.e., $P_{III'}(8,2a,0,-1)$, with the initial conditions $g(0,a,b)=0, g'(0,a,b)=\frac{1}{2a}$. Then Chen et al. obtained the small $s$ and large $s$ asymptotics of $g(s,a,b)$, which are given by formulas (2.10) and (2.11) in \cite{CCF}:
\bea
g(s,a,b)&=&\frac{s}{2a}-\frac{s^2}{2a^2(a^2-1)}+\frac{3s^3}{2a^3(a^2-1)(a^2-4)}+\frac{3(3-2 a^2) s^4}{a^4 (a^2-1)^2(a^2-4)(a^2-9) }\nonumber\\[10pt]
&+&\frac{5 (11 a^2-36) s^5}{2 a^5(a^2-1)^2(a^2-4)  (a^2-9)  (a^2-16)}\nonumber\\[10pt]
&-&\frac{3 (91 a^6-1115 a^4+4219 a^2-3600) s^6}{2 a^6 (a^2-1)^3(a^2-4)^2 (a^2-9)(a^2-16)(a^2-25)}
+O(s^7),\quad s\rightarrow 0,\quad a\notin \mathbb{Z},\nonumber
\eea
\bea
g(s,a,b)&=&\frac{1}{2}s^{\frac{2}{3}}-\frac{a}{6}s^{\frac{1}{3}}+\frac{a(a^2-1)}{162s^{\frac{1}{3}}}+\frac{a^2(a^2-1)}{486s^{\frac{2}{3}}}+\frac{a(a^2-1)}{486 s}-\frac{a^2(a^2-1)(2a^2-11)}{6561s^{\frac{4}{3}}}\nonumber\\
&+&O(s^{-\frac{5}{3}}),\qquad s\rightarrow \infty.\nonumber
\eea
Here we produce more terms in the asymptotic expansions by using their method. In view of the relations (\ref{re11}) and (\ref{re12}), we obtain the desired results of the theorem.
\end{proof}

\begin{theorem}\label{thm}
Suppose that $n\rightarrow\infty$ and $t\rightarrow 0$ such that $s=2n^2 t$ is fixed and the following limits
$$
\Delta_{1}(s):=\lim_{n\rightarrow\infty}\frac{D_{2n}\left(\frac{s}{2n^2}\right)}{D_{2n}(0)},
$$
$$
\Delta_{2}(s):=\lim_{n\rightarrow\infty}\frac{D_{2n+1}\left(\frac{s}{2n^2}\right)}{D_{2n+1}(0)}
$$
exist. We have for $s\rightarrow 0$,
$$
\Delta_{1}(s)=\Delta_{2}(s)=\exp\left(-\frac{4 s^2}{3}-\frac{256 s^4}{315}-\frac{966656 s^6}{1403325}+O(s^8)\right),
$$
while for $s\rightarrow\infty$,
$$
\Delta_{1}(s)=\Delta_{2}(s)=\exp\left(\frac{\ln 2}{12}+3\zeta'(-1)-\frac{\ln s}{36}-\frac{9}{4} s^{\frac{2}{3}}+\frac{1}{576}s^{-\frac{2}{3}}+\frac{7}{20736}s^{-\frac{4}{3}}+O(s^{-2})\right),
$$
where  $\zeta(\cdot)$ is the Riemann zeta function. Here the constant term $\frac{\ln 2}{12}+3\zeta'(-1)$ is called Dyson's constant \cite{Dyson1976}.
\end{theorem}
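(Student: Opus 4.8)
The plan is to reduce the scaled Jacobi Hankel determinant to the Pollaczek--Jacobi type one of Chen and Dai \cite{ChenDai}, thereby avoiding the integral representation of Section~4; the double scaling asymptotics we need are essentially already present in the Coulomb fluid analysis of \cite{CCF}. Dividing (\ref{hd1}) and (\ref{hd2}) by themselves at $t=0$,
$$
\frac{D_{2n}(t)}{D_{2n}(0)}=\prod_{\epsilon=\pm 1}\frac{\tilde{D}_n\!\left(t,\tfrac{\epsilon}{2},\al\right)}{\tilde{D}_n\!\left(0,\tfrac{\epsilon}{2},\al\right)},\qquad
\frac{D_{2n+1}(t)}{D_{2n+1}(0)}=\frac{\tilde{D}_n\!\left(t,\tfrac12,\al\right)}{\tilde{D}_n\!\left(0,\tfrac12,\al\right)}\cdot\frac{\tilde{D}_{n+1}\!\left(t,-\tfrac12,\al\right)}{\tilde{D}_{n+1}\!\left(0,-\tfrac12,\al\right)}.
$$
Setting $\tilde{\Delta}(s,a,b):=\lim_{n\to\infty}\tilde{D}_n(\tfrac{s}{2n^2},a,b)/\tilde{D}_n(0,a,b)$, the first identity gives $\Delta_1(s)=\tilde{\Delta}(s,\tfrac12,\al)\,\tilde{\Delta}(s,-\tfrac12,\al)$; in the second, the replacement $n\mapsto n+1$ in the last factor changes the scaling argument by a factor $1+O(1/n)$ only, so by the local uniformity of the convergence the same limit results and $\Delta_2(s)=\Delta_1(s)$. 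This already accounts for the coincidence asserted in the statement.

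Next I would make $\tilde{\Delta}$ explicit. Just as in (\ref{eq1})--(\ref{sn}), one has $t\frac{d}{dt}\ln\tilde{h}_n(t,a,b)=-R_n^*(t,a,b)$ and hence $H_n(t,a,b)=t\frac{d}{dt}\ln\tilde{D}_n(t,a,b)=-\sum_{j=0}^{n-1}R_j^*(t,a,b)$, so that, after the substitution $t=u/(2n^2)$, $\ln\tilde{\Delta}(s,a,b)$ is the $n\to\infty$ limit of $-\int_0^s u^{-1}\sum_{j<n}R_j^*(\tfrac{u}{2n^2},a,b)\,du$. By (\ref{dou1})--(\ref{dou2}), $R_j^*(t,-\tfrac12,\al)=\tfrac12R_{2j}(t)$ and $R_j^*(t,\tfrac12,\al)=\tfrac12R_{2j+1}(t)$, so the inner sum is a Riemann sum converging, by the local uniformity of the limits in the preceding theorem together with $R_{2j}(t)=O(t)$ near $t=0$ (which disposes of the small-index terms), to an integral of $g_1$ resp.\ $g_2$. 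Inserting this and interchanging the two integrations one obtains the compact formula
$$
\ln\Delta_1(s)=\ln\Delta_2(s)=-\frac14\int_0^s \bigl[g_1(u)+g_2(u)\bigr]\,\frac{\ln(s/u)}{u}\,du ,
$$
with $g_1,g_2$ as in the preceding theorem; this representation is also implicit in \cite{CCF}.

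For $s\to0$ I would substitute the convergent power series for $g_1(u)+g_2(u)$ and integrate term by term via $\int_0^s u^{2k-1}\ln(s/u)\,du=s^{2k}/(2k)^2$. Because $g_1$ and $g_2$ have opposite odd parts while their even parts --- which carry no $\al$ --- add, the outcome is automatically an $\al$-independent series in even powers of $s$ with vanishing constant term, reproducing $\ln\Delta_1(s)=-\tfrac{4s^2}{3}-\tfrac{256s^4}{315}-\tfrac{966656s^6}{1403325}+O(s^8)$. For $s\to\infty$ the leading contribution comes from $u=O(s)$, where $g_1(u)+g_2(u)\sim4u^{2/3}$, giving $-\frac14\int_0^s4u^{2/3}\,\tfrac{\ln(s/u)}{u}\,du\sim-\tfrac94 s^{2/3}$; feeding in the subleading tail $g_1(u)+g_2(u)=4u^{2/3}-\tfrac{1}{324}u^{-2/3}-\tfrac{7}{2916}u^{-4/3}+\cdots$ produces, by the same integration, the corrections $\tfrac{1}{576}s^{-2/3}+\tfrac{7}{20736}s^{-4/3}+O(s^{-2})$.

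The main obstacle is the data that these one-sided expansions do not see: the coefficient $-\tfrac{1}{36}$ of $\ln s$ and the $s$-independent constant. They are controlled by $C_0:=\lim_{s\to\infty}\bigl[\int_0^s u^{-1}(g_1(u)+g_2(u))\,du-6s^{2/3}\bigr]$ --- the $\ln s$ coefficient being $-C_0/4$ --- and by the global profile of $g_1+g_2$, none of which is fixed by the normalization $\Delta_i(0)=1$. I would pin them down exactly as \cite{CCF} do: Dyson's Coulomb fluid / linear statistics treatment \cite{Chen1998,Dyson} of $\tilde{D}_n(t,a,b)$ yields a free-energy expansion whose $n$-growing part recovers the classical Jacobi-type constant $\tilde{D}_n(0,a,b)$, expressible through the Barnes $G$-function \cite{Barnes,Voros}; expanding the finite remainder via $\ln G(z+1)=\tfrac{z^2}{2}\ln z-\tfrac34 z^2+\tfrac{z}{2}\ln(2\pi)-\tfrac1{12}\ln z+\zeta'(-1)+o(1)$ isolates the $\ln s$ term and the constant. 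Summing the $a=\tfrac12$ and $a=-\tfrac12$ contributions, the $\al$-dependence cancels and the constant collapses to Dyson's constant $\tfrac{\ln 2}{12}+3\zeta'(-1)$ \cite{Dyson1976}; since \cite{CCF} already record the constant for general $(a,b)$, in practice this last step is just the specialization $(a,b)=(\pm\tfrac12,\al)$ followed by addition. This delivers the stated large-$s$ expansion of $\Delta_1(s)=\Delta_2(s)$ and finishes the proof.
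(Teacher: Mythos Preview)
Your proposal is correct and follows essentially the same strategy as the paper: both reduce the problem, via the factorizations (\ref{hd1})--(\ref{hd2}), to the Pollaczek--Jacobi double-scaling results of \cite{CCF}, and both ultimately rely on \cite{CCF} for the constant term. The difference is mainly one of packaging.

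The paper takes the shorter path: it quotes directly from \cite[Thm.~4]{CCF} the full small-$s$ and large-$s$ expansions of $\Delta(s,a,b)$, including the explicit constant $c(a)=\ln\bigl(G(a+1)/(2\pi)^{a/2}\bigr)$, specializes to $a=\pm\tfrac12$, and evaluates $c(\tfrac12)+c(-\tfrac12)$ using the closed form $G(\tfrac12)=2^{1/24}\pi^{-1/4}\mathrm{e}^{3\zeta'(-1)/2}$ from \cite{Voros}. Your route re-derives the integral representation $\ln\Delta_1(s)=-\tfrac14\int_0^s(g_1+g_2)\,u^{-1}\ln(s/u)\,du$ (which is effectively how \cite{CCF} obtains its Theorem~4), uses it for the small-$s$ series and the power-law part of the large-$s$ expansion, and then invokes \cite{CCF} anyway for the $\ln s$ coefficient and constant. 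So the detour through $g_1,g_2$ buys nothing you do not already need from \cite{CCF}; it also leans on a Riemann-sum/uniformity argument for $\sum_{j<n}R_j^*$ that is heuristic here. Your treatment of the constant via the large-$z$ asymptotics of $\ln G(z+1)$ is less direct than the paper's single evaluation of $G(\tfrac12)$.
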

\begin{proof}
From (\ref{hd1}) and (\ref{hd2}) we have
\be\label{re13}
\Delta_{1}(s)=\Delta_{2}(s)=\Delta\left(s,\frac{1}{2},\al\right)\Delta\left(s,-\frac{1}{2},\al\right),
\ee
where
$$
\Delta(s,a,b):=\lim_{n\rightarrow\infty}\frac{\tilde{D}_{n}\left(\frac{s}{2n^2},a,b\right)}{\tilde{D}_{n}(0,a,b)}.
$$
By using the integral representation of $\Delta(s,a,b)$ in terms of $g(s,a,b)$, Chen et al. obtained the small $s$ and large $s$ asymptotics of $\Delta(s,a,b)$ in \cite[Thm. 4]{CCF}: for $s\rightarrow 0$,
\bea
\Delta(s,a,b)&=&\exp\Big[-\frac{s}{2 a}+\frac{s^2}{8 a^2 (a^2-1)}-\frac{s^3}{6 a^3(a^2-1)(a^2-4)}\nonumber\\[10pt]
&+&\frac{3 (2 a^2-3) s^4}{16 a^4 (a^2-1)^2 (a^2-4)(a^2-9) }-\frac{(11 a^2-36)s^5}{10 a^5 (a^2-1)^2  (a^2-4)(a^2-9) (a^2-16)}\nonumber\\[10pt]
&+&\frac{(91 a^6-1115 a^4+4219 a^2-3600)s^6}{24 a^6 (a^2-1)^3 (a^2-4)^2(a^2-9)(a^2-16)(a^2-25) }+O(s^7)\Big],\nonumber
\eea
and for $s\rightarrow\infty$,
\bea
\Delta(s,a,b)&=&\exp\Big[c(a)-\frac{9}{8}s^{\frac{2}{3}}+\frac{3a}{2}s^{\frac{1}{3}}+\frac{1-6a^2}{36}\ln s-\frac{a(a^2-1)}{18}s^{-\frac{1}{3}}-\frac{a^2(a^2-1)}{216}s^{-\frac{2}{3}}\nonumber\\[10pt]
&-&\frac{a(a^2-1)}{486}s^{-1}+\frac{a^2(a^2-1)(2a^2-11)}{11664}s^{-\frac{4}{3}}+\frac{a(a^2-1)(a^4-a^2-15)}{21870}s^{-\frac{5}{3}}\nonumber\\[10pt]
&+&O(s^{-2})\Big],\nonumber
\eea
where $c(a)$ is an integration constant, independent of $s$, given by
\be\label{cons}
c(a)=\ln\frac{G(a+1)}{(2\pi)^{\frac{a}{2}}},
\ee
and $G(\cdot)$ is the Barnes $G$-function. Here we show more terms in the asymptotic expansions of $\Delta(s,a,b)$.

In view of (\ref{re13}), we find for $s\rightarrow 0$,
$$
\Delta_{1}(s)=\Delta_{2}(s)=\exp\left(-\frac{4 s^2}{3}-\frac{256 s^4}{315}-\frac{966656 s^6}{1403325}+O(s^8)\right),
$$
and for $s\rightarrow\infty$,
$$
\Delta_{1}(s)=\Delta_{2}(s)=\exp\left(c\left(\frac{1}{2}\right)+c\left(-\frac{1}{2}\right)-\frac{\ln s}{36}-\frac{9}{4} s^{\frac{2}{3}}+\frac{1}{576}s^{-\frac{2}{3}}+\frac{7}{20736}s^{-\frac{4}{3}}+O(s^{-2})\right).
$$
It follows from (\ref{cons}) that
$$
c\left(\frac{1}{2}\right)+c\left(-\frac{1}{2}\right)=\ln \left(G\left(\frac{1}{2}\right)G\left(\frac{3}{2}\right)\right)=\ln \left(G^2\left(\frac{1}{2}\right)\Gamma\left(\frac{1}{2}\right)\right).
$$
With the aid of the expression of $G\left(\frac{1}{2}\right)$ in \cite[(6.39)]{Voros}, i.e.,
$$
G\left(\frac{1}{2}\right)=2^{\frac{1}{24}}\pi^{-\frac{1}{4}}\mathrm{e}^{\frac{3\zeta'(-1)}{2}},
$$
and $\zeta(\cdot)$ is the Riemann zeta function, we have
$$
c\left(\frac{1}{2}\right)+c\left(-\frac{1}{2}\right)=\frac{\ln 2}{12}+3\zeta'(-1).
$$
Hence, the theorem is established.
\end{proof}
\begin{remark}
We find that the small $s$ and large $s$ asymptotic behaviors of the scaled Hankel determinant in Theorem \ref{thm} are identical with the asymptotics of the Hankel determinant for the singularly perturbed Gaussian weight under a different double scaling \cite{MLC}. Our results demonstrate the universality in random matrix theory. That is, although the Hankel determinants for different weights are quite distinct in the finite dimensional case, they may have the same characteristics when considering the limiting behavior.
\end{remark}

\section*{Acknowledgments}
Chao Min was supported by the Scientific Research Funds of Huaqiao University under grant number 17BS402.
Yang Chen was supported by the Macau Science and Technology Development Fund under grant number FDCT 023/2017/A1 and by the University of Macau under grant number MYRG 2018-00125-FST.

\newpage
\noindent School of Mathematical Sciences, Huaqiao University, Quanzhou 362021, China\\
e-mail: chaomin@hqu.edu.cn\\

\noindent Department of Mathematics, Faculty of Science and Technology, University of Macau, Macau, China\\
e-mail: yangbrookchen@yahoo.co.uk


\begin{thebibliography}{}
%
%
\bibitem{Barnes}
{E. W. Barnes}, The theory of the $G$-function, {\em Quart. J. Math.} {\bf 31} ({1900}) {264--314}.
\bibitem{BWW}
{N. Berestycki}, {C. Webb} and {M. D. Wong}, {Random Hermitian matrices and Gaussian multiplicative chaos}, {\em Probab. Theory Relat. Fields} {\bf 172} ({2018}) {103--189}.
\bibitem{BMM}
{L. Brightmore}, {F. Mezzadri} and {M. Y. Mo}, {A matrix model with a singular weight and Painlev\'{e} III}, {\em Commun. Math. Phys.} {\bf 333} ({2015}) {1317--1364}.
\bibitem{Chazy1909}
{J. Chazy}, {Sur les \'{e}quations diff\'{e}rentielles du second ordre \`{a} points critiques fixes}, {\em C. R. Acad. Sci. Paris} {\bf 148} ({1909}) {1381--1384}.
\bibitem{Chazy1911}
{J. Chazy}, {Sur les \'{e}quations diff\'{e}rentielles du troisi\`{e}me ordre et d'ordre sup\'{e}rieur dont l'int\'{e}grale g\'{e}n\'{e}rale a ses points critiques fixes}, {\em Acta Math.} {\bf 34} ({1911}) {317--385}.
\bibitem{Charlier}
{C. Charlier}, {Asymptotics of Hankel determinants with a one-cut regular potential and Fisher-Hartwig singularities}, {\em Int. Math. Res. Notices} {\bf 2019} ({2019}) {7515--7576}.
\bibitem{CG}
{C. Charlier} and {R. Gharakhloo}, Asymptotics of Hankel determinants with a Laguerre-type or Jacobi-type potential and Fisher-Hartwig singularities, arXiv: 1902.08162.
\bibitem{CC}
{M. Chen} and {Y. Chen}, {Singular linear statistics of the Laguerre unitary ensemble and Painlev\'{e} III. Double scaling analysis}, {\em J. Math. Phys.} {\bf 56} ({2015}) {063506}.
\bibitem{CCF}
{M. Chen}, {Y. Chen} and {E. Fan}, {Perturbed Hankel determinant, correlation functions and Painlev\'{e} equations}, {\em J. Math. Phys.} {\bf 57} ({2016}) {023501}.
\bibitem{CCF2019}
{M. Chen}, {Y. Chen} and {E. Fan}, {The Riemann-Hilbert analysis to the Pollaczek-Jacobi type orthogonal polynomials}, {\em Stud. Appl. Math.} {\bf 143} ({2019}) {42--80}.
\bibitem{ChenDai}
{Y. Chen} and {D. Dai}, {Painlev\'{e} V and a Pollaczek-Jacobi type orthogonal polynomials}, {\em J. Approx. Theory} {\bf 162} ({2010}) {2149--2167}.
\bibitem{ChenIts}
{Y. Chen} and {A. Its}, {Painlev\'{e} III and a singular linear statistics in Hermitian random matrix ensembles, I}, {\em J. Approx. Theory} {\bf 162} ({2010}) {270--297}.
\bibitem{Chen1998}
{Y. Chen}, {N. Lawrence}, {On the linear statistics of Hermitian random matrices}, {\em J. Phys. A: Math. Gen.} {\bf 31} ({1998}) {1141--1152}.
\bibitem{Chihara}
{T. S. Chihara}, {\em An introduction to orthogonal polynomials}, {Dover}, {New York}, {1978}.
\bibitem{Cosgrove}
{C. M. Cosgrove}, {Chazy's second-degree Painlev\'{e} equations}, {\em J. Phys. A: Math. Gen.} {\bf 39} ({2006}) {11955--11971}.
\bibitem{Dai}
{D. Dai} and {L. Zhang}, {Painlev\'{e} VI and Hankel determinants for the generalized Jacobi weight}, {\em J. Phys. A: Math. Theor.} {\bf 43} ({2010}) {055207}.
\bibitem{Deift}
{P. Deift}, {\em Orthogonal Polynomials and Random Matrices: A Riemann-Hilbert Approach}, {Amer. Math. Soc.}, {Providence, RI}, {1999}.
\bibitem{Dyson}
{F. J. Dyson}, {Statistical theory of the energy levels of complex systems, I, II, III}, {\em J. Math. Phys.} {\bf 3} ({1962}) {140--156, 157--165, 166--175}.
\bibitem{Dyson1976}
{F. J. Dyson}, {Fredholm determinants and inverse scattering problems}, {\em Commun. Math. Phys.} {\bf 47} ({1976}) {171--183}.
\bibitem{FW2002}
{P. J. Forrester} and {N. S. Witte}, {Application of the $\tau$-function theory of Painlev\'{e} equations to random matrices: $\mathrm{P_{V}}$, $\mathrm{P_{III}}$, the LUE, JUE and CUE}, {\em Commun. Pure Appl. Math.} {\bf 55} ({2002}) {679--727}.
\bibitem{FW2006}
{P. J. Forrester} and {N. S. Witte}, {Boundary conditions associated with the Painlev\'{e} III$'$ and V evaluations of some random matrix averages}, {\em J. Phys. A: Math. Gen.} {\bf 39} ({2006}) {8983--8995}.
\bibitem{Gromak}
{V. I. Gromak}, {I. Laine} and {S. Shimomura}, {\em Painlev\'{e} Differential Equations in the Complex Plane}, {Walter de Gruyter}, {Berlin}, {2002}.
\bibitem{Ismail}
{M. E. H. Ismail}, {\em Classical and Quantum Orthogonal Polynomials in One Variable}, {Encyclopedia of Mathematics and its Applications 98}, {Cambridge University Press}, {Cambridge}, {2005}.
\bibitem{Jimbo1981}
{M. Jimbo} and {T. Miwa}, {Monodromy preserving deformation of linear ordinary differential equations with rational coefficients. II}, {\em Physica D} {\bf 2} ({1981}) {407--448}.
\bibitem{Kui}
{A. B. J. Kuijlaars}, {Universality}, in: {\em The Oxford Handbook of Random Matrix Theory}, {edited by G. Akemann, J. Baik and P. Di Francesco}, {Oxford University Press}, {Oxford}, {2015}, {pp. 103--134}.
\bibitem{Lebedev}
{N. N. Lebedev}, {\em Special Functions and Their Applications}, {Dover}, {New York}, {1972}.
\bibitem{Lyu2017}
{S. Lyu} and {Y. Chen}, {The largest eigenvalue distribution of the Laguerre unitary ensemble}, {\em Acta Math. Sci.} {\bf 37} ({2017}) {439--462}.
\bibitem{LCF}
{S. Lyu}, {Y. Chen} and {E. Fan}, Asymptotic gap probability distributions of the Gaussian unitary ensembles and Jacobi unitary ensembles, {\em Nucl. Phys. B} {\bf 926} (2018) 639--670.
\bibitem{Mehta}
{M. L. Mehta}, {\em Random Matrices}, {3rd edn.}, {Elsevier}, {New York}, {2004}.
\bibitem{Min2019}
{C. Min} and {Y. Chen}, {Painlev\'{e} transcendents and the Hankel determinants generated by a discontinuous Gaussian weight}, {\em Math. Meth. Appl. Sci.} {\bf 42} ({2019}) {301--321}.
\bibitem{Min2020}
{C. Min} and {Y. Chen}, {Painlev\'{e} V, Painlev\'{e} XXXIV and the degenerate Laguerre unitary ensemble}, {\em Random Matrices: Theor. Appl.} {\bf 9} ({2020}) {2050016 (22 pages)}.
\bibitem{MLC}
{C. Min}, {S. Lyu} and {Y. Chen}, {Painlev\'{e} III$'$ and the Hankel determinant generated by a singularly perturbed Gaussian weight}, {\em Nucl. Phys. B} {\bf 936} ({2018}) {169--188}.
\bibitem{Szego}
{G. Szeg\H{o}}, {\em Orthogonal Polynomials}, {4th edn.}, {Amer. Math. Soc.}, {Providence, RI}, {1975}.
\bibitem{Texier}
{C. Texier} and {S. N. Majumdar}, {Wigner time-delay distribution in chaotic cavities and freezing transition}, {\em Phys. Rev. Lett.} {\bf 110} ({2013}) {250602}.
\bibitem{Voros}
{A. Voros}, Spectral functions, special functions and the Selberg zeta function, {\em Commun. Math. Phys.} {\bf 110} ({1987}) {439--465}.
\bibitem{Wang}
{Z. Wang} and {E. Fan}, {Critical edge behavior in the singularly perturbed Pollaczek-Jacobi type unitary ensemble}, arXiv: 2004.11971.
\bibitem{Wu}
{X.-B. Wu}, {S.-X. Xu} and {Y.-Q. Zhao}, {Gaussian unitary ensemble with boundary spectrum singularity and $\sigma$-form of the Painlev\'{e} II equation}, {\em Stud. Appl. Math.} {\bf 140} ({2018}) {221--251}.
\bibitem{Xu2015}
{S.-X. Xu}, {D. Dai} and {Y.-Q. Zhao}, {Painlev\'{e} III asymptotics of Hankel determinants for a singularly perturbed Laguerre weight}, {\em J. Approx. Theory} {\bf 192} ({2015}) {1--18}.
\bibitem{Xu2016}
{S.-X. Xu}, {D. Dai}, {Y.-Q. Zhao}, {Hankel determinants for a singular complex weight and the first and third Painlev\'{e} transcendents}, {\em J. Approx. Theory} {\bf 205} ({2016}) {64--92}.
\end{thebibliography}
\end{document}